\newtheorem{definition}{\bf Definition}[section]
\newtheorem{proposition}{\bf Proposition}[section]
\newtheorem{theorem}{\bf Theorem}[section]
\newtheorem{corollary}{\bf Corollary}[section]
\newtheorem{remark}{Remark}[section]
\title{Quantum quasiballistic dynamics and\\ thick point spectrum}
\author{Moacir Aloisio\footnote{UFMG, Avenida Ant\^onio Carlos, 6627, Caixa Postal 702, 30123-970 Belo Horizonte, MG, Brazil.}, Silas L. Carvalh${\rm o}^{*}$\footnote{Corresponding author. Email: silas@mat.ufmg.br Tel: +55 31 3409 5969.}, and C\'esar R. de Oliveira\footnote{UFSCar, Rodovia Washington Lu\'is, km 235, Caixa Postal 676, 13565-905 S\~ao Carlos, SP, Brazil.}}
\date{December 2018}
\begin{document}

\maketitle

\begin{abstract} We obtain dynamical lower bounds for some  self-adjoint operators with pure point spectrum in terms of the spacing properties of their eigenvalues. In particular, it is shown that for systems with thick point spectrum, typically in Baire's sense, the dynamics of each initial condition (with respect to some orthonormal bases of the space) presents a quasiballistic behaviour. We present explicit applications to some Schr\"odinger operators.
\end{abstract} 

\ 

{\flushleft {\bf MSC (2010):} primary 81Q10. Secondary: 28A80, 35J10.}

\  

{\flushleft {\bf Keywords:} point spectrum; spacing of eigenvalues; quasiballistic dynamics.}

\ 

%%%%%%%%%%%%%%%%%%%%%%%%%%%%%%%%%%%%%%%%%%%%%%%%%%%%%%%%%%%%%%%%%%%%%%%%%%%%%%%%%%%%%%%%%%%%%%%%%%%%%%%%%%%%%%%%%%%%%%%%%%%%%%%%%%%%%%%%%%%%%%%%%%%%%%%%%%%%%%%%%%%%%%%%%%%%%%%%%%%%%%%%%%%%%%%%%%%%%%%%%%%%%%%%%%%%%%%%%%%%%%%%%%%%%%%%%%%%%%%%%%%%%%%%%%%--INTRODUCTION--%%%%%%%%%%%%%%%%%%%%%%%%%%%%%%%%%%%%%%%%%%%%%%%%%%%%%%%%%%%%%%%%%%%%%%%%%%%%%%%%%%%%%%%%%%%%%%%%%%%%%%%%%%%%%%%%%%%%%%%%%%%%%%%%%%%%%%%%%%%%%%%%%%%%%%%%%%%%%%%%%%%%%%%%%%%%%%%%%%%%%%%%%%%%%%%%%%%%%%%%%%%%%%%%%%%%%%%%%%%%%%%%%%%%%%%%%%%%%%%%%%%%%%%%%%%%%%%%

\section{Introduction} 

\noindent Let~$T$ be a self-adjoint operator in an infinite dimensional and separable complex Hilbert space~$\mathcal{H}$ and $\xi \in \mathcal{H}$. The relations between the  dynamics $e^{-itT}\xi$ and  spectral properties of~$T$ is a classical subject of the mathematics and physics literatures. Systems with pure point and purely continuous spectra present important qualitative and quantitative dynamical differences. The dense point spectrum in an interval, also called thick point spectrum, is an intermediate step from discrete to continuous spectra, and so one could expect some sort of resemblance of the continuous dynamics to thick point dynamics. However, to the best knowledge of the present authors, no such  property has been yet detailed in the literature. It is the main purpose of this note to show that thick point spectrum implies Baire generically, in the set of initial conditions~$\xi$, a quasiballistic dynamics (in the sense of Definition~\ref{def2}). This applies, in particular, to some arbitrarily small Hilbert-Schmidt perturbations of purely continuous operators whose spectra contain an interval (taking into account the Weyl-von Neumann Theorem \cite{von Neumann,Weyl}). 

It is well known that there are explicit relations between the large time behaviour of the dynamics $e^{-itT}\xi$ and the fractal properties of the spectral measure $\mu_\xi^T$ of~$T$ associated with~$\xi$. In this context, we refer to \cite{Barbaroux1,Guarneri,Last}, among others. In order to obtain the desired quasiballistic dynamics for systems with thick point spectra, we shall explore properties of suitable generalized fractal dimensions of their spectral measures (see Theorem \ref{maintheorem}). 

Let $B=\{\xi_n\}_{n \in {\mathbb{Z}}}$ be an orthonormal basis of~$\mathcal{H}$ and denote, for each $p>0$, the (time-average) $p$-moment of the position operator at time $t>0$, with initial condition~$\xi$, by
\[\langle \langle |X|^p \rangle \rangle_{t,\xi} := \frac{1}{t} \int_0^t \sum_{n \in {\mathbb{Z}}} |n|^p |\langle e^{-isT}\xi ,\xi_n \rangle|^2~{\mathrm d}s.\]
These quantities describe the asymptotic behaviour of the ``basis position'' of the wave packet $e^{-itT}\xi$ as~$t$ goes to infinity (see \cite{Barbaroux1,DamanikDDP,Germinet,Guarneri,Last} and references therein). In order to describe the algebraic growth $\langle \langle |X|^p \rangle \rangle_{t,\xi} \sim t^{\alpha(p)}$ for large t, one usually considers the upper transport exponent, given by
\[ \alpha_B^+(\xi,p) :=  \limsup_{t \to \infty} \frac{\ln \langle \langle |X|^p \rangle \rangle_{t,\xi}}{\ln  t}.\]

The following result, extracted from \cite{Germinet} (see also \cite{DamanikDDP,DamanikDDP2}), gives sufficient conditions for  $\langle\langle |X|^p \rangle \rangle_{t,\xi}$ and  $\alpha_B^+(\xi,p)$ to be well defined.

\begin{proposition}[Propositions 3.1 and 3.2 in \cite{Germinet}] \label{propmoments} Let~$T$ be a bounded self-adjoint operator on $\mathcal{H}$, and let $B=\{\xi_n\}_{n \in {\mathbb{Z}}}$ be an orthonormal basis of~$\mathcal H$. Then, for each $f \in C_c^\infty({\mathbb{R}})$,
\begin{enumerate}
\item [{\rm i)}] $\langle\langle |X|^p \rangle \rangle_{t,f(T)\xi_0}$ is well defined (finite) for all $t,p>0;$ 
\item [{\rm ii)}] $\alpha_B^+(f(T)\xi_0,p)$ are increasing functions of $p;$
\item [{\rm iii)}] $\alpha_B^+(f(T)\xi_0,p) \in [0,p]$, for all $p>0$.
\end{enumerate}
\end{proposition}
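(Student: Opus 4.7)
The plan is to establish (i) as the main technical step and then derive (ii) and (iii) as short consequences. Let $X$ denote the position operator defined by $X\xi_n = n\xi_n$ on its natural domain, and set $\psi := f(T)\xi_0$. My strategy for (i) rests on two a priori estimates: first, that $\psi \in D(|X|^{p/2})$ for every $p>0$; and second, that $\||X|^{p/2}e^{-isT}\psi\|^2$ grows at most polynomially in $|s|$ (in fact, at most like $|s|^p$, a ballistic bound). Given these, Tonelli lets us rewrite $\langle \langle |X|^p \rangle \rangle_{t,\psi} = \frac{1}{t}\int_0^t \||X|^{p/2} e^{-isT}\psi\|^2\,ds$, and one obtains $\langle \langle |X|^p \rangle \rangle_{t,\psi} \le C(1+t^p)$, which yields (i) and simultaneously supplies the upper bound needed in (iii).

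For the first a priori estimate, I would use the Helffer--Sj\"ostrand formula $f(T) = \frac{1}{\pi}\int \bar{\partial}\tilde{f}(z)\,(z-T)^{-1}\,dm(z)$, where $\tilde{f}$ is an almost analytic extension of $f$, combined with iterated commutator identities of the form $[|X|^k,(z-T)^{-1}] = (z-T)^{-1}[T,|X|^k](z-T)^{-1}$. The compact support and Schwartz-class smoothness of $f$ should make the resulting Cauchy-type integrals absolutely convergent, yielding $\||X|^k f(T)\xi_0\| < \infty$ for every integer $k$. For the second estimate, a Duhamel/Gr\"onwall argument on $g(s) := \||X|^{p/2}e^{-isT}\psi\|^2$, whose derivative is proportional to $\langle [T,|X|^p]\,e^{-isT}\psi,\,e^{-isT}\psi\rangle$, combined with the commutator bound from the previous step, produces the desired polynomial-in-$s$ growth.

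For (ii), monotonicity is immediate: since $|n|^{p_1}\le|n|^{p_2}$ for all $n\in\mathbb{Z}$ whenever $0<p_1\le p_2$, the pointwise bound $\langle \langle |X|^{p_1} \rangle \rangle_{t,\psi}\le \langle \langle |X|^{p_2} \rangle \rangle_{t,\psi}$ holds; applying $\ln$, dividing by $\ln t$, and passing to $\limsup_{t\to\infty}$ gives $\alpha_B^+(\psi,p_1)\le\alpha_B^+(\psi,p_2)$. For (iii), the upper bound $\alpha_B^+(\psi,p)\le p$ follows directly from the ballistic estimate above. The lower bound $\alpha_B^+(\psi,p)\ge 0$ reduces, after excluding the degenerate case where $\psi$ is a scalar multiple of $\xi_0$, to showing $\langle \langle |X|^p \rangle \rangle_{t,\psi}\ge C>0$ for all large $t$; this follows from Parseval, $\sum_n|\langle e^{-isT}\psi,\xi_n\rangle|^2 = \|\psi\|^2$, combined with $|n|^p\ge 1$ for $|n|\ge 1$ and a positivity argument showing that the time-averaged weight on $\xi_0$ cannot coincide with $\|\psi\|^2$.

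The main obstacle is justifying the commutator expansions in the stated generality: for an arbitrary bounded self-adjoint $T$ and an arbitrary orthonormal basis $B$, the commutator $[T,X]$ need not even be densely defined in a useful sense, let alone bounded. Making the Helffer--Sj\"ostrand and commutator manipulations rigorous here -- or, equivalently, faithfully transporting the original argument in \cite{Germinet} (whose setting tacitly ties $T$ to $B$) to the present formulation -- is the delicate technical step on which the entire proof hinges.
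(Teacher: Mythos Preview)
The paper does not give its own proof of this proposition: it is quoted from \cite{Germinet}, and the only argument supplied is Remark~\ref{momentsrem}, which reduces an arbitrary pair $(\mathcal H,B)$ to $(\ell^2(\mathbb Z),\{\delta_n\})$ by unitary conjugation. Your proposal instead attempts a self-contained proof via Helffer--Sj\"ostrand and commutator bounds, and you correctly isolate the crux: controlling $[T,X]$.

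The obstacle you flag is not a technicality but a genuine gap that cannot be closed in the stated generality. Take $\mathcal H=\ell^2(\mathbb Z)$ with the canonical basis, choose a unit vector $\hat\phi\perp\delta_0$ with $\hat\phi\notin D(|X|^{p/2})$ (for instance $\hat\phi\propto\sum_{n\ne 0}|n|^{-3/4}\delta_n$), set $v_{\pm}=(\delta_0\pm\hat\phi)/\sqrt2$, and let $T$ be any bounded self-adjoint operator with $Tv_+=0$, $Tv_-=v_-$. Then $e^{-isT}\delta_0=\tfrac12\bigl[(1+e^{-is})\delta_0+(1-e^{-is})\hat\phi\bigr]$, so for $f\in C_c^\infty$ with $f\equiv1$ on $\sigma(T)$ the sum $\sum_n|n|^p|\langle e^{-isT}f(T)\delta_0,\delta_n\rangle|^2$ diverges for a.e.\ $s$ once $p\ge\tfrac12$, and both (i) and (iii) fail. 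In \cite{Germinet} the operators are Schr\"odinger-type, so $[T,X]$ is bounded and the commutator/ballistic scheme you outline is precisely the mechanism behind their proof; the present paper has tacitly dropped that locality hypothesis in restating the result, and no argument can recover (i) and (iii) without it. Your instinct that the difficulty lies in ``transporting the original argument in \cite{Germinet}\ldots to the present formulation'' is therefore exactly right---the transport is impossible, not merely delicate.
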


\begin{remark}\label{momentsrem}{\rm Proposition \ref{propmoments} is usually stated for ${\mathcal{H}} = \ell^2({\mathbb{Z}})$  and $B= \{\delta_n\}$, the canonical basis of this space. We note that the result in the statement of Proposition~\ref{propmoments} may be reduced to this case. Namely, let $U: {\mathcal{H}} \longrightarrow \ell^2({\mathbb{Z}})$ be a unitary operator given by the law $(U\xi)_n = \langle \xi,\xi_n \rangle$,  and let $\Tilde{T}= UTU^{-1}$ (this is a bounded self-adjoint operator defined on $\ell^2({\mathbb{Z}})$, whose spectral resolution satisfies $P^{\Tilde{T}} = U P^T U^{-1}$). Then, for each $p>0$, $f \in C_c^\infty({\mathbb{R}})$, and $s \in \mathbb{R}$,
\begin{eqnarray}\label{changebase}
 \sum_{n \in {\mathbb{Z}}} |n|^p |\langle e^{-isT}f(T)\xi_0 ,\xi_n \rangle|^2 &=& 
 \sum_{n \in {\mathbb{Z}}} |n|^p |\langle e^{-isUTU^{-1}}f(UTU^{-1})U\xi_0,U\xi_n \rangle|^2 \nonumber\\ &=&   \sum_{n \in {\mathbb{Z}}} |n|^p |\langle e^{-is\Tilde{T}}f(\Tilde{T})\delta_0 ,\delta_n \rangle|^2.
\end{eqnarray}}
\end{remark}

The next result, due to Guarneri and Schulz-Baldes \cite{Guarneri}, establishes a connection between the upper transport exponents $\alpha_B^+(\xi,p)$ and a particular dimensional property of $\mu_\xi^T$.     

\begin{theorem}[Theorem~1 in \cite{Guarneri}]\label{theoGS} Let~$T$ be a self-adjoint operator in $\mathcal{H}$, and let~$B=\{\xi_n\}_{n \in {\mathbb{Z}}}$ be an orthonormal basis. Then, for each $\xi \in \mathcal{H}$ and each $p>0$, 
\begin{equation}\label{pack}
 \alpha_B^+(\xi,p)  \geq   {\rm dim}^+_{{\rm P}}(\mu_\xi^T)p.
\end{equation}
\end{theorem}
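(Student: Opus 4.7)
The plan is to adapt the Guarneri--Schulz-Baldes argument of \cite{Guarneri}, which proceeds in three stages: (i) a cut-off that turns the tail mass of the wave packet into a lower bound for the $p$-moment; (ii) a Strichartz--Last-type inequality that controls the low-$|n|$ part in terms of the concentration function $S_\mu(\varepsilon):=\sup_{E\in\mathbb{R}}\mu(B(E,\varepsilon))$ of the spectral measure; and (iii) the identification of the scaling exponent of $S_{\mu_\xi^T}$ with the upper packing dimension $\dim^+_{\mathrm{P}}(\mu_\xi^T)$.

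By Remark~\ref{momentsrem}, I may assume $\mathcal{H}=\ell^2(\mathbb{Z})$ with $B$ the canonical basis. Writing $a_n(s):=\langle e^{-isT}\xi,\xi_n\rangle$ and using $\sum_n|a_n(s)|^2=\|\xi\|^2$, the pointwise bound $|n|^p\geq N^p\,\mathbf{1}_{|n|>N}$ gives, for every $N\in\mathbb{N}$,
\begin{equation*}
\langle\langle|X|^p\rangle\rangle_{t,\xi}\ \geq\ N^p\Big(\|\xi\|^2-\sum_{|n|\leq N}\frac{1}{t}\int_0^t|a_n(s)|^2\,\mathrm{d}s\Big).
\end{equation*}

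To estimate the inner sum, note that $a_n(s)=\widehat{\nu_n}(s)$ with $\nu_n:=\mu_{\xi,\xi_n}^T$ the (complex) spectral measure. Last's refinement of the Strichartz inequality gives
\begin{equation*}
\frac{1}{t}\int_0^t|\widehat{\nu_n}(s)|^2\,\mathrm{d}s\ \leq\ C\sum_k|\nu_n(I_k^t)|^2,
\end{equation*}
where $\{I_k^t\}_{k\in\mathbb{Z}}$ is a partition of $\mathbb{R}$ into intervals of length $1/t$. Since $P^T(I)$ is an orthogonal projection, Cauchy--Schwarz applied to $\nu_n(I)=\langle P^T(I)\xi,P^T(I)\xi_n\rangle$ yields $|\nu_n(I)|^2\leq\mu_\xi^T(I)\,\mu_{\xi_n}^T(I)$. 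Summing over $|n|\leq N$ and using the trace bound $\sum_{|n|\leq N}\mu_{\xi_n}^T(I)=\mathrm{tr}\bigl(P^T(I)P_{V_N}\bigr)\leq 2N+1$, with $P_{V_N}$ the projection onto $\mathrm{span}\{\xi_n:|n|\leq N\}$, I obtain
\begin{equation*}
\sum_{|n|\leq N}\frac{1}{t}\int_0^t|a_n(s)|^2\,\mathrm{d}s\ \leq\ C\,\sup_k\mu_\xi^T(I_k^t)\,\sum_{k}\sum_{|n|\leq N}\mu_{\xi_n}^T(I_k^t)\ \leq\ C(2N+1)\,S_{\mu_\xi^T}(1/t).
\end{equation*}

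Finally, choosing $N=N(t)$ comparable to $\|\xi\|^2/\bigl(4C\,S_{\mu_\xi^T}(1/t)\bigr)$ forces the bracket in the first display to exceed $\|\xi\|^2/2$, whence $\langle\langle|X|^p\rangle\rangle_{t,\xi}\geq c\,S_{\mu_\xi^T}(1/t)^{-p}$ for all sufficiently large $t$. Taking the $\limsup$ of the logarithm at rate $\log t$ yields $\alpha_B^+(\xi,p)\geq p\,\limsup_{\varepsilon\to 0^+}\log S_{\mu_\xi^T}(\varepsilon)/\log\varepsilon$. The hard step is the final one: identifying this $S$-exponent with $\dim^+_{\mathrm{P}}(\mu_\xi^T)$. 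This is a standard but non-trivial characterization of the upper packing dimension of a measure in terms of its concentration function (a Tricot--Heurteaux-type result), which must be invoked to close the argument; the remaining ingredients are elementary once the Strichartz--Last estimate is in hand.
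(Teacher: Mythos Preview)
The paper does not prove Theorem~\ref{theoGS}; it is quoted from \cite{Guarneri} and used as a black box. So there is nothing in the paper to compare against, and your write-up must stand on its own. Unfortunately it does not: there is a genuine gap at the very last step.

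Your concentration-function exponent
\[
\limsup_{\varepsilon\to 0^+}\frac{\log S_{\mu}(\varepsilon)}{\log\varepsilon},\qquad S_\mu(\varepsilon)=\sup_{E}\mu(B(E,\varepsilon)),
\]
is \emph{not} equal to $\dim_{\mathrm P}^+(\mu)$, and no Tricot--Heurteaux-type identity of that kind exists. Since $S_\mu(\varepsilon)\ge \mu(B(x,\varepsilon))$ for every $x$, one has $\log S_\mu(\varepsilon)/\log\varepsilon\le \log\mu(B(x,\varepsilon))/\log\varepsilon$ for each fixed $\varepsilon$, and hence the $S$-exponent is bounded above by $\mu\text{-ess\,inf}\,d_\mu^+(x)=\dim_{\mathrm P}^-(\mu)$, not by the essential \emph{supremum} $\dim_{\mathrm P}^+(\mu)$. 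A concrete counterexample: take $\mu=\delta_0+\ell|_{[1,2]}$. Then $S_\mu(\varepsilon)\ge 1$ for all $\varepsilon$, so your exponent is $0$, while $\dim_{\mathrm P}^+(\mu)=1$ because $d_\mu^+(x)=1$ on a set of positive measure. Your argument would then give only $\alpha_B^+(\xi,p)\ge 0$, missing the content of the theorem entirely.

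The actual Guarneri--Schulz-Baldes proof does not go through the global concentration function. For any $\gamma<\dim_{\mathrm P}^+(\mu_\xi^T)$ one restricts to the positive-measure set $A_\gamma=\{x:d_\mu^+(x)>\gamma\}$ and, along a suitable sequence of scales $\varepsilon\to 0$, uses a Besicovitch/Vitali covering argument to produce at least $c\,\varepsilon^{-\gamma}$ pairwise disjoint $\varepsilon$-balls carrying a fixed fraction of the mass of $\mu|_{A_\gamma}$. Plugging this \emph{counting} lower bound into the Strichartz--Last estimate (your steps (i)--(ii), which are fine) forces the wave packet to occupy $\gtrsim t^{\gamma}$ basis sites at the corresponding times, and then the cut-off argument yields $\alpha_B^+(\xi,p)\ge \gamma p$. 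The covering step is precisely what lets you pick up the essential supremum rather than the infimum of the local dimensions; replacing it by the single number $S_\mu(1/t)$ throws that information away.
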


In the statement of Theorem \ref{theoGS}, ${\rm dim}^+_{{\rm P}}(\mu_\xi^T)$ represents the (upper) packing dimension of $\mu_\xi^T$ (see Definition~\ref{defHPD} ahead).

The inequality~\eqref{pack} is, in many situations, far from being optimal; del Rio et.\ al.~\cite{Rio} have presented an example of operator in~$\ell^2({\mathbb{Z}})$ with pure point spectrum (and so, with ${\rm dim}^+_{{\rm P}}(\mu_\xi^T)=0$ for every $\xi\in \ell^2({\mathbb{Z}})$) such that  $\alpha_B^+(\delta_0,2)=2$ (here, $B$ is the canonical basis of $\ell^2({\mathbb{Z}})$); see Appendix 2 in~\cite{Rio} for details.

We note that under some additional assumptions on $\mu_\xi^T$, Barbaroux et.\ al.\ \cite{Barbaroux1} have obtained a refinement of this estimate in~(\ref{pack}); namely, they have proven the following result.  

\begin{theorem}[Theorem 2.1 in \cite{Barbaroux1}]\label{theoBGT} Let~$T$ be a self-adjoint operator in $\mathcal{H}$, and let~$B=\{\xi_n\}_{n \in {\mathbb{Z}}}$ be an orthonormal basis. Then, for each $\xi \in \mathcal{H}$ and each $p>0$, 
\begin{equation}\label{gfd}
 \alpha_B^+(\xi,p) \geq   D_{\mu_\xi^{T}}^+\biggr(\frac{1}{1+p}\biggr)p.
\end{equation}
\end{theorem}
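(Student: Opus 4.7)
The plan is to combine a Chebyshev-type ``escape from a box'' argument with a Strichartz-style estimate relating the time-averaged box return probability to the $q$-correlation integral of the spectral measure.

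First, via Remark~\ref{momentsrem}, one reduces to the canonical setting $\mathcal{H}=\ell^2(\mathbb{Z})$ with $B=\{\delta_n\}_{n\in\mathbb{Z}}$. Splitting the sum defining $\langle\langle|X|^p\rangle\rangle_{t,\xi}$ at $|n|=N$ and using the conservation law $\sum_{n\in\mathbb{Z}}\frac{1}{t}\int_0^t|\langle e^{-isT}\xi,\delta_n\rangle|^2\,ds=\|\xi\|^2$ gives the Chebyshev-type lower bound
\[
\langle\langle |X|^p\rangle\rangle_{t,\xi} \;\geq\; N^p\,\bigl(\|\xi\|^2-P(N,t)\bigr),\qquad
P(N,t):=\sum_{|n|\le N}\frac{1}{t}\int_0^t|\langle e^{-isT}\xi,\delta_n\rangle|^2\,ds.
\]
The task thereby reduces to identifying the largest admissible growth rate $N=N(t)$ that keeps $P(N,t)\le\|\xi\|^2/2$ along some sequence $t_k\to\infty$.

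The technical heart of the proof is to establish, for each $q\in(0,1)$, an upper bound of the form
\[
P(N,t)\;\leq\; C(q)\,(2N+1)^{1-q}\,\bigl[I_q(\mu_\xi^T,1/t)\bigr]^{-1},\qquad
I_q(\mu,\epsilon):=\int\mu(B(x,\epsilon))^{q-1}\,d\mu(x).
\]
I would approach this by expanding amplitudes via the spectral theorem, $\langle e^{-isT}\xi,\delta_n\rangle=\widehat{\nu_n}(s)$ with $\nu_n(A):=\langle P^T(A)\xi,\delta_n\rangle$, applying H\"older's inequality with conjugate exponents $1/q$ and $1/(1-q)$ to peel off the combinatorial factor $(2N+1)^{1-q}$, and then using the Plancherel-type identity $\sum_n|\nu_n(A)|^2=\|P^T(A)\xi\|^2=\mu_\xi^T(A)$ to collapse the surviving $n$-sum onto the positive measure $\mu_\xi^T$, at which point a Strichartz-type inequality for the Fej\'er average of $|\widehat{\mu}|^2$ produces the correlation integral $I_q$.

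The final step is a routine optimisation. By the very definition of the upper generalised fractal dimension,
\[
D^+_{\mu_\xi^T}(q)\;=\;\limsup_{\epsilon\to 0^+}\frac{\log I_q(\mu_\xi^T,\epsilon)}{(q-1)\log\epsilon},
\]
there is, for each $\eta>0$, a sequence $t_k\to\infty$ along which $I_q(\mu_\xi^T,1/t_k)\ge t_k^{(1-q)(D^+_{\mu_\xi^T}(q)-\eta)}$. Choosing $q=1/(1+p)$ and $N_k=\delta\,t_k^{D^+_{\mu_\xi^T}(q)-\eta}$ with $\delta$ small enough, the inequality of the previous step yields $P(N_k,t_k)\le\|\xi\|^2/2$, so the Chebyshev estimate gives $\langle\langle|X|^p\rangle\rangle_{t_k,\xi}\gtrsim t_k^{p(D^+_{\mu_\xi^T}(1/(1+p))-\eta)}$. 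Letting $\eta\to 0^+$ delivers the claimed bound on $\alpha_B^+(\xi,p)$.

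The main obstacle is the estimate of the second step. Because the $\nu_n$ are complex-valued, Strichartz's inequality cannot be applied to each $\nu_n$ individually; one must couple the orthogonality $\sum_n|\nu_n|^2=\mu_\xi^T$ with H\"older so that a positive-measure Strichartz estimate becomes applicable. The specific coupling $q=1/(1+p)$ is then forced by the scaling balance between the combinatorial factor $(2N+1)^{1-q}$ and the threshold $N^p$, and it is precisely this balance that makes the dimension index $1/(1+p)$ appear in the conclusion.
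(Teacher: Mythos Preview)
The paper does not prove this theorem; it quotes it from \cite{Barbaroux1}. Your sketch has the right overall architecture (Chebyshev escape, H\"older, a Strichartz-type bound, then optimisation), but the central estimate you propose,
\[
P(N,t)\;\leq\; C(q)\,(2N+1)^{1-q}\,[I_q(\mu_\xi^T,1/t)]^{-1},
\]
is false as stated. Take a spectral measure of the form $\mu=(1-\epsilon)\delta_0+\epsilon\,\mathrm{Leb}|_{[0,1]}$. Here $I_q(\mu,1/t)\sim\epsilon^{q} t^{\,1-q}\to\infty$, while the point mass forces $P(N,t)\geq 1-\epsilon$ for every $N$; along $N=t^{1/2}$ the right-hand side tends to~$0$ but the left-hand side does not. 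Equivalently, the ``Plancherel + Strichartz'' step you describe would have to deliver $\bigl(\sum_n b_n^{1/q}\bigr)^{q}\le C/I_q$, and the same example shows this fails: the left side stays near~$1$ while the right side tends to~$0$. This also undermines your explanation of the index $q=1/(1+p)$ as a ``scaling balance''; with your inequality any $q\in(0,1)$ would work in the optimisation step, yielding the (false) conclusion $\alpha^+_B(\xi,p)\ge p\,D^+_{\mu_\xi^T}(q)$ for all such~$q$.

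In the proof of \cite{Barbaroux1} the key lemma goes in the opposite direction and uses \emph{sublinear} rather than superlinear powers: one establishes the lower bound
\[
\sum_n b_n(t)^{q}\;\ge\; c_q\,I_q\bigl(\mu_\xi^T,1/t\bigr)\qquad(0<q<1),
\]
via a Strichartz/Jensen-type argument applied to the positive measure $\mu_\xi^T$. One then bounds $\sum_n b_n^{q}$ from above by H\"older against the weight $|n|^{-pq}$,
\[
\sum_{n\neq 0} b_n^{\,q}\;\le\;\Bigl(\sum_n |n|^{p} b_n\Bigr)^{q}\Bigl(\sum_{n\neq 0}|n|^{-pq/(1-q)}\Bigr)^{1-q},
\]
and it is the \emph{convergence} of the second factor that forces $pq/(1-q)>1$, i.e.\ $q>1/(1+p)$. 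Combining the two displays and letting $q\downarrow 1/(1+p)$ gives $\alpha^+_B(\xi,p)\ge p\,D^+_{\mu_\xi^T}(1/(1+p))$. Thus the index $1/(1+p)$ comes from a summability constraint in the H\"older step, not from a scaling match between $(2N+1)^{1-q}$ and $N^{p}$.
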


In the statement of ~Theorem~\ref{theoBGT}, $D_{\mu_\xi^{T}}^+(q)$ represents the $q$-upper generalized fractal dimension of~$\mu_\xi^T$ (see Definition~\ref{defGFD} ahead). 

\begin{remark} {\rm We stress that inequality~\eqref{gfd} does not depend on the choice of the orthonormal basis of $\mathcal{H}$, and that $\alpha_B^+(\xi,p)=\infty$ may occur.}
\end{remark}
  
This refinement is, in fact, far from trivial, since although every pure point measure has packing dimension equal to zero, some of them may have non-trivial generalized fractal dimensions, as discussed in~\cite{Barbaroux1}. In this sense, it is clear that non-trivial dynamical lower bounds may occur even when the spectrum is pure point. For this reason, during the last decade, many authors have been exploring the relations between dynamical lower bounds and pure point spectrum. We mention the papers \cite{Barbaroux1,Carvalho1,Carvalho2,Oliveira1,Oliveira2,Rio,Howland,SimonBall} for references and additional comments about important results on dynamical lower bounds and pure point spectrum. 

Some words about notation:~$\mathcal{H}$  will always denote an infinite dimensional and separable complex Hilbert space and~$T$ a self-adjoint operator in~$\mathcal{H}$.  For each Borel set $\Lambda \subset \mathbb{R}$, $P^T(\Lambda)$ represents the spectral resolution of~$T$ over~$\Lambda$. A finite Borel measure~$\mu$ on $\mathbb{R}$ is supported on a Borel set $\Lambda \subset \mathbb{R}$ if $\mu({\mathbb{R}} \backslash \Lambda)=0$; we denote the support of~$\mu$ by supp$(\mu)$. In this paper~$\mu$ always indicates  a finite positive Borel measure on $\mathbb{R}$. For each $x \in \mathbb{R}$ and each $\epsilon>0$,  $B(x,\epsilon)$ denotes the open interval $(x-\epsilon,x+\epsilon)$. 

The paper is organized as follows. In Subsection~\ref{subSectMain} we state the main results of this work.  Subsection~\ref{secapplic} is devoted to some of their applications. In Section~\ref{secpre}, we fix some notation and present a dynamical characterization of the generalized fractal dimensions (Proposition~\ref{propGFD1}). In Section~\ref{secmain}, we investigate the existence of suitable dense $G_\delta$ sets of initial conditions (Theorem~\ref{theoGFD}), and then present the proof of Theorem~\ref{maintheorem}. In Appendix~\ref{Appendix}, we present two  simple but important results for this work.

%%%%%%%%%%%%%%%%%%%%%%%%%%%%%%%%%%%%%%%%%%%%%%%%%%%%%%%%%%%%%%%%%%%%%%%%%%%%%%%%%%%%%%%%%%%%%%%%%%%%%%%%%%%%%%%%%%%%%%%%%%5

\subsection{Main results}\label{subSectMain}
\noindent In order to properly present our results, we introduce the following notion.

\begin{definition}\label{def1}\rm Let $(a_j) \subset {\mathbb{R}}$. One says that $(a_j)$ is weakly-spaced if, for each $\alpha>0$, there exists a subsequence $(a_{j_l})$ of  $(a_j)$ such that
\begin{enumerate}
\item[i)] $c_l := a_{j_l}-a_{j_{l+1}}>0$ is monotone and $\displaystyle\lim_{l\to\infty} (a_{j_l}-a_{j_{l+1}}) =0$. 

\item[ii)]  There exists  $C_\alpha > 0$ so that, for every~$l \geq 1$, $a_{j_{l}} - a_{j_{l+1}} \geq C_\alpha/l^{1+\alpha}$.
\end{enumerate}
\end{definition}

\begin{remark}{\rm Let $-\infty < a < b < \infty$. If $\displaystyle\cup_{j} \{a_j\}$ is a dense subset in $[a,b]$, then $(a_j)$ is weakly-spaced (see Proposition~\ref{propWS}).}
\end{remark}

\begin{definition}\label{def2}\rm Let $\xi \in \mathcal{H}$ and let $B=\{\xi_n\}_{n \in {\mathbb{Z}}}$ be an orthonormal basis. The dynamics $e^{-itT}\xi$ is called quasiballistic with respect to $B$ if, for each $p>0$, $\alpha_B^+(\xi,p) = p$. 
\end{definition}

Breuer et.\ al.\ have obtained in \cite{BreuerLS} results on dynamical upper bounds for discrete one dimensional Schr\"odinger operators in terms of various spacing properties of the eigenvalues of their finite volume approximations. In contrast with such results, we show in this work that, under some assumptions on the operator, if the sequence of its eigenvalues is weakly-spaced (Definition~\ref{def1}), then the dynamics of every initial condition in a robust set have quasiballistic behaviour (in the sense of Definition~\ref{def2}). More specifically, we shall prove the following result.

\ 

\begin{theorem}\label{maintheorem} Let $\Lambda \subset \mathbb{R}$ be a bounded Borel set, and let~$T$ be  a self-adjoint operator with pure point spectrum in~$\Lambda$. Suppose that $\Lambda$ contains a weakly-spaced sequence of eigenvalues of~$T$. Then,
\begin{enumerate}
\item [{\rm 1.}]  The set
\[G^T(\Lambda) :=  \{ \xi \in {\mathcal{H}}_\Lambda \mid D_{\mu_\xi^{T}}^+(q) = 1 \;\text{ for each }  0<q<1 \}\]
is a dense $G_\delta$ set in  ${\mathcal{H}}_\Lambda:=P^T(\Lambda)(\mathcal{H})$.
    
\item [{\rm 2.}] For each  $\xi \in G^T(\Lambda)$, each orthonormal basis $B=\{\xi_n\}_{n \in {\mathbb{Z}}}$ so that $\xi_0 := \Vert \xi \Vert^{-1} \xi$, and each $f \in C_c^\infty({\mathbb{R}})$, one has 
\[\alpha_B^+(f(T)\xi_0,p) = p \,\text{ for each }  p>0.\]   
\end{enumerate}
\end{theorem}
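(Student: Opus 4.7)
Part 2 reduces essentially to Part 1. Given $\xi \in G^T(\Lambda)$ with $\|\xi\|=1$ and $p>0$, the value $q=1/(1+p)$ lies in $(0,1)$, so $D^+_{\mu_\xi^T}(1/(1+p))=1$, and Theorem \ref{theoBGT} yields $\alpha_B^+(\xi,p)\ge p$, while Proposition \ref{propmoments}(iii) provides the matching upper bound $\alpha_B^+(f(T)\xi_0,p)\le p$. The only remaining point is that the claim concerns $f(T)\xi_0$ rather than $\xi_0=\xi$ itself; since its spectral measure is $|f|^2\,d\mu_{\xi_0}^T$, I would invoke a lemma (plausibly one of the two results promised in Appendix \ref{Appendix}) to the effect that $D^+_\mu(q)$ is unchanged when $\mu$ is replaced by $g\,d\mu$ for a bounded continuous $g$ non-vanishing on the relevant part of $\mathrm{supp}(\mu)$. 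Applied with $g=|f|^2$, this promotes the Part 1 conclusion to $\mu^T_{f(T)\xi_0}$ and closes Part 2.

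For Part 1 the plan is to deploy Baire's theorem. Enumerating the rationals in $(0,1)$ as $\{q_k\}_{k\in \mathbb{N}}$, I would write
\[G^T(\Lambda) = \bigcap_{k\in \mathbb{N}}\bigcap_{n\in \mathbb{N}} A_{k,n}, \qquad A_{k,n} := \Bigl\{\xi \in \mathcal{H}_\Lambda \;:\; D^+_{\mu_\xi^T}(q_k) > 1-\tfrac{1}{n}\Bigr\},\]
using that $D^+_\mu(q)\le 1$ for $q\in(0,1)$ (automatic for finite Borel measures on $\mathbb{R}$) and the monotonicity of $q\mapsto D^+_\mu(q)$ to pass from rationals back to all of $(0,1)$. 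It suffices then to show each $A_{k,n}$ is open and dense, whereupon Baire finishes the proof. Openness should be immediate from the lower semicontinuity of $\xi\mapsto D^+_{\mu_\xi^T}(q_k)$, which I expect to extract from the dynamical characterization in Proposition \ref{propGFD1}; this is precisely the kind of machinery Theorem \ref{theoGFD} seems designed to package. So the essential work is the density of each $A_{k,n}$.

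To establish density, fix $q=q_k\in(0,1)$ and $\eta=1/n$. Let $(a_{j_l})$ be the weakly-spaced subsequence of eigenvalues guaranteed by hypothesis and $\{\phi_{j_l}\}$ the associated unit eigenvectors. Given $\xi\in \mathcal{H}_\Lambda$ and $\delta>0$, the idea is to construct $\xi' = \xi + \delta\sum_l b_l \phi_{j_l}$ with $(b_l)\in \ell^2$ of carefully tuned decay so that $\|\xi'-\xi\|<\delta$ and $D^+_{\mu_{\xi'}^T}(q)>1-\eta$; the spectral measure $\mu_{\xi'}^T$ then carries at least mass $\delta^2|b_l|^2$ at each $a_{j_l}$. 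Using Definition \ref{def1}(ii) with small $\alpha>0$, the spacing bound $c_l\ge C_\alpha/l^{1+\alpha}$ produces, for each small $\epsilon>0$, a cutoff $L(\epsilon)\sim \epsilon^{-1/(1+\alpha)}$ below which the $\epsilon$-neighbourhoods of the $a_{j_l}$ are essentially disjoint. Controlling $\mu_{\xi'}(B(a_{j_l},\epsilon))$ for $l\le L(\epsilon)$ in terms of $|b_l|^2$, inserting into the integral characterization from Proposition \ref{propGFD1}, and comparing powers of $\epsilon$, the polynomial scaling of $L(\epsilon)$ feeds a lower bound of the form $D^+_{\mu_{\xi'}^T}(q)\ge 1/(1+\alpha)$, which exceeds $1-\eta$ once $\alpha$ is small.

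The principal obstacle is to make this last estimate genuinely quantitative and uniform, balancing (a) $\sum_l |b_l|^2$ small, forcing rapid decay of $(b_l)$ and therefore light masses across many scales, against (b) enough non-negligible atoms near each scale $\epsilon$ for the integral $\int \mu_{\xi'}(B(x,\epsilon))^{q-1}\,d\mu_{\xi'}$ to scale like $\epsilon^{(q-1)(1-\eta)}$; requirement (b) becomes most delicate as $q\to 0^+$, where the exponent $q-1\to -1$ amplifies the contribution of lightly populated balls. The background measure $\mu_\xi^T$ must further be prevented from spoiling the estimate, which constrains how $\delta$ and the $b_l$ interact. Exploiting the freedom to take $\alpha$ arbitrarily small in Definition \ref{def1}(ii), together with a diagonal or polynomial-logarithmic choice of $(b_l)$, is where I expect the technical heart of the proof to reside.
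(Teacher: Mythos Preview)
Your outline matches the paper's strategy: Part~1 is the content of Theorem~\ref{theoGFD} (density via explicit tails on a weakly-spaced subsequence of eigenvectors) together with Proposition~\ref{propgdelta} (the $G_\delta$ structure via Proposition~\ref{propGFD1}), and Part~2 is exactly Theorem~\ref{theoBGT} plus Proposition~\ref{propmoments}(iii). Two points of execution differ from the paper and are worth flagging.

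First, the transfer from $\xi_0$ to $f(T)\xi_0$ in Part~2 is \emph{not} handled by a result in Appendix~\ref{Appendix} (which contains only Propositions~\ref{propWS} and~\ref{propIM}). The paper simply uses the pointwise bound $\mu_{f(T)\xi_0}^T(B(x,\epsilon))\le(\sup|f|^2)\,\mu_{\xi_0}^T(B(x,\epsilon))$; for $q\in(0,1)$ this already forces $D_{\mu_{f(T)\xi_0}^T}^+(q)\ge D_{\mu_{\xi_0}^T}^+(q)=1$, and no two-sided comparison of densities is needed.

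Second, your density construction $\xi'=\xi+\delta\sum_l b_l\phi_{j_l}$ has two soft spots that the paper avoids by a slightly different move. The mass of $\mu_{\xi'}^T$ at $a_{j_l}$ is $|\langle\xi,\phi_{j_l}\rangle+\delta b_l|^2$, which need \emph{not} dominate $\delta^2|b_l|^2$ (cancellation is possible), and the full background $\mu_\xi^T$ can contaminate every ball $B(a_{j_l},\epsilon)$. The paper instead \emph{truncates} $\xi$ to a finite sum $\sum_{l\le k} a_l e_l$ and then appends the tail $\sum_{l\ge r(k)} l^{-(1+\alpha)/2} e_{j_l}$; this makes the masses at the $a_{j_l}$ exactly $l^{-(1+\alpha)}$, and the finitely many remaining atoms $\lambda_1,\dots,\lambda_k$ are kept out of the balls $B(\lambda_{j_l},\epsilon_m)$ once $m$ is large. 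The monotonicity clause in Definition~\ref{def1}(i) is used here: setting $\epsilon_m=\tfrac12(\lambda_{j_m}-\lambda_{j_{m+1}})$ guarantees that $B(\lambda_{j_l},\epsilon_m)\cap\{\lambda_{j_s}\}=\{\lambda_{j_l}\}$ for every $l\le m$, so the integral reduces to $\sum_{l\le m} l^{-(1+\alpha)q}\asymp m^{1-(1+\alpha)q}$. With $\epsilon_m\ge C_\alpha/(2m^{1+\alpha})$ this yields $D_{\mu_{\xi_k}^T}^+(q)\ge\dfrac{1-(1+\alpha)q}{(1-q)(1+\alpha)}$, not $1/(1+\alpha)$ as you wrote; both tend to~$1$ as $\alpha\downarrow0$, so the conclusion is unaffected, but the correct exponent explains why the choice $b_l=l^{-(1+\alpha)/2}$, matched to the spacing exponent, is the natural one.
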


\begin{corollary}\label{cormaintheorem}
Let $-\infty<a<b<\infty$, and let~$T$ be a self-adjoint operator with purely thick point spectrum in $I:=[a,b]$. Then,

\begin{enumerate}
\item [{\rm 1.}] The set
\[G^T(I) :=  \{ \xi \in {\mathcal{H}}_I \mid D_{\mu_\xi^{T}}^+(q) = 1 \;\text{ for each }  0<q<1 \}\]
is a dense $G_\delta$ set in ${\mathcal{H}}_I=P^T(I)(\mathcal{H})$.
    
\item [{\rm 2.}] For each $\xi \in G^T(I)$, each orthonormal basis $B=\{\xi_n\}_{n \in {\mathbb{Z}}}$ so that $\xi_0 := \Vert \xi \Vert^{-1} \xi$, and each $f \in C_c^\infty({\mathbb{R}})$, one has 
\[\alpha_B^+(f(T)\xi_0,p) = p \,\text{ for each }  p>0.
\]   

\end{enumerate}
\end{corollary}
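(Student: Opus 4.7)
The plan is to deduce Corollary \ref{cormaintheorem} as a direct specialization of Theorem \ref{maintheorem} to the case $\Lambda = I = [a,b]$. The only thing that really needs to be checked is that the hypothesis of Theorem \ref{maintheorem}, namely that $\Lambda$ contains a weakly-spaced sequence of eigenvalues, is automatically satisfied whenever the spectrum of $T$ in $I$ is purely thick point.

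First I would unpack the assumption: ``purely thick point spectrum in $I$'' means that the restriction of $T$ to $\mathcal{H}_I = P^T(I)(\mathcal{H})$ is pure point and that the set of its eigenvalues is dense in $I$. In particular, the eigenvalues of $T$ lying in $I$ form a countable family $(a_j) \subset I$ with $\overline{\bigcup_j \{a_j\}} \supset [a,b]$, so $\bigcup_j \{a_j\}$ is a dense subset of the bounded interval $[a,b]$.

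Next I would invoke the remark following Definition \ref{def1} (that is, Proposition \ref{propWS}): any enumeration of a dense countable subset of a bounded interval is weakly-spaced in the sense of Definition \ref{def1}. Applied to the sequence $(a_j)$ above, this yields, for each $\alpha>0$, a monotone subsequence $(a_{j_l})$ with consecutive gaps tending to zero and bounded below by $C_\alpha/l^{1+\alpha}$. Thus the hypothesis of Theorem \ref{maintheorem} is verified with $\Lambda = I$.

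At this stage the two conclusions of Corollary \ref{cormaintheorem} follow immediately from the corresponding conclusions of Theorem \ref{maintheorem}: the set $G^T(I)$ is a dense $G_\delta$ in $\mathcal{H}_I$, and for each $\xi \in G^T(I)$, every orthonormal basis $B = \{\xi_n\}_{n \in \mathbb{Z}}$ with $\xi_0 = \|\xi\|^{-1}\xi$, and every $f \in C_c^\infty(\mathbb{R})$, one has $\alpha_B^+(f(T)\xi_0, p) = p$ for all $p>0$. There is no additional obstacle here; the substantive work is encapsulated in Theorem \ref{maintheorem} and in Proposition \ref{propWS}, and the role of the corollary is merely to highlight the most natural setting, thick point spectrum, in which those hypotheses hold automatically.
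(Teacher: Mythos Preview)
Your proposal is correct and follows exactly the same approach as the paper: unpack ``thick point spectrum'' to obtain a dense set of eigenvalues in $I$, apply Proposition~\ref{propWS} to conclude weak-spacing, and then invoke Theorem~\ref{maintheorem} with $\Lambda = I$. The paper's own proof is the two-line version of what you wrote.
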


\begin{proof} [{Proof} {\rm (Corollary~\ref{cormaintheorem})}] Since the set of eigenvalues of~$T$ contains a dense subset of~$I$, it follows from Proposition~\ref{propWS} that such subset is weakly-spaced. The result is now a direct consequence of Theorem~\ref{maintheorem}.    
\end{proof}

Theorem~\ref{maintheorem} gives a rather general sufficient condition for an operator with pure point spectrum to present non-trivial dynamical lower bounds. The main ingredient in the proof of this result involves a fine analysis of the generalized fractal dimensions of spectral measures of operators with pure point spectrum. Namely, in order to prove Theorem~\ref{maintheorem}, we explore some relations between such dimensions and the spacing properties of its eigenvalues (see Theorem~\ref{theoGFD}).

\begin{remark}
\end{remark}
{\rm 
\begin{enumerate}
 \item[i)] We enumerate the orthonormal basis according to the set of integer numbers. In some specific cases, there are natural enumerations so that our results should be adapted; for instance, in case of $\ell^2({\mathbb Z}^d)$, with $d\ge2$ (see~\cite{Barbaroux1}).

\item[ii)] When~$T$ is a self-adjoint operator with purely thick point spectrum in $I$, it is natural to work with its normalized eigenvectors, say $\{e_n\}$ (namely, $Te_n=\lambda_ne_n$). In this case, for each initial condition $\xi$, each moment \[\langle \langle |X|^p \rangle \rangle_{t,\xi} = \sum_{n \in {\mathbb{Z}}} |n|^p |\langle \xi ,e_n \rangle|^2
\]
is constant over time~$t$, given that, for each $n\in\mathbb{Z}$, $e^{-iTt}e_n=e^{-i\lambda_nt}e_n$. Consequently, if $\sum_{n \in {\mathbb{Z}}} |n|^p |\langle \xi ,e_n \rangle|^2 < \infty$ (this happens when $\xi = e_j$, for instance), it follows from Theorem \ref{theoBGT} that 
\[ D_{\mu_\xi^{T}}^+(q) = 0 \;\text{ for each }  0<q<1\]
and our quasiballistic results do not apply to such vectors. We note that, in this case, $G^T(\Lambda) \subset G(B)$, where 
\[G(B):=\{\xi \in {\mathcal{H}}\mid \langle \langle |X|^p \rangle \rangle_{t,\xi} \equiv \infty   \text{ for all } p>0\}\]
(see also Proposition \ref{propIM}).

We stress that in some sense, we are dealing with exceptional vectors (in the sense that they are ``far from being'' eigenvectors), and our results precise the meaning of such exceptions by selecting bases and initial conditions, always under the hypothesis of the existence of a weakly-spaced sequence of eigenvalues.
\end{enumerate}}

%%%%%%%%%%%%%%%%%%%%%%%%%%%%%%%%%%%%%%%%%%%%%%%%%%%%%%%%%%%%%%%%%%%%%%%%%%%%%%%%%%%%%%%%%%%%%%%%%%%%%%%%%%%%%%%%%%%%%%%%%%%%%%%%%%%%%%%%%%%%%%%%%%%%%%%%%%%%%%%%%%%%%%%%%%%%%%

\subsection{Applications}\label{secapplic}

\noindent Now we illustrate our general results by presenting some explicit applications to Schr\"odinger operators. Namely, we gather below some Schr\"odinger operators with purely thick point spectrum in a bounded interval for which, therefore,  the hypotheses of Corollary~\ref{cormaintheorem} are satisfied.

\

\noindent {\bf Discrete Schr\"odinger operators with uniform electric fields.} Let ${\mathbb{Z}}^d$, $d\geq2$, be endowed with the norm $|k| = \sum_{j=1}^d |k_j|$. The discrete Schr\"odinger operator with uniform electric field of constant strength $E \in {\mathbb{Z}}^d$, $H_v^d$, is defined by the action
\[(H_v^d u)_j := \displaystyle\sum_{|k|=1}u_{j+k} + (E,j)u_j + v_j u_j, \]
where $(\cdot,\cdot)$ denotes the ordinary scalar product in ${\mathbb{R}}^d$ and $\sup_{j}\vert v_j\vert < \infty$. For $d\geq2$, under some assumptions on $(v_j)$ and the direction of the electric field, it is possible to show (see details in~\cite{DinaburgStark}) that~$H_v^d$ has thick point spectrum in~$\mathbb{R}$.  

\

\noindent {\bf Anderson model.} For each fixed $a > 0$, let $\Omega = [-a,a]^{\mathbb{Z}}$ be endowed with the product topology and with the respective Borel $\sigma$-algebra. Assume that $(\omega_j)_{j \in \mathbb{Z}} = \omega \in \Omega$ is a set of independent, identically distributed real-valued random variables with a common probability measure $\rho$ not concentrated on a single point and such that $\int \vert \omega_j \vert^\gamma {\mathrm d}\rho(w_j) < \infty$ for some $\gamma>0$. Denote by $\nu:=\rho^{\mathbb{Z}}$ the probability measure on $\Omega$. The Anderson model is a random Hamiltonian on $\ell^2(\mathbb{Z})$, defined for each $\omega \in \Omega$ by
\[(h_\omega u)_j :=  u_{j-1} + u_{j+1} + \omega_j u_j.\]
It turns out that \cite{DamanikDDP,StolzIntr}  
\[\sigma(h_\omega) = [-2,2] + \text{\rm supp}(\rho),\]
and $\nu$-a.s. $\omega$, $h_\omega$ has thick point spectrum~\cite{carmona,von}. 

\begin{remark}{\rm We note that there are some Anderson operators \cite{StolzIntr} and Anderson Dirac operators \cite{Roberto} defined on $\ell^2({\mathbb{Z}}^d)$, or even in $\mathrm{L}^2({\mathbb{R}}^d)$, with $d\geq 2$, satisfying the hypotheses of Corollary~\ref{cormaintheorem}.}
\end{remark}

\

\noindent {\bf Discrete limit-periodic Schr\"odinger operators.} Let the discrete  Schr\"odinger operator $H_v$, defined on~${\mathbb{Z}}^d$, $d\ge1$, by the action
\[(H_v u)_j := \epsilon \displaystyle\sum_{|k|=1}u_{j+k} + v_j u_j, \]
where $\epsilon$ is a small positive  constant. % and $v = (v_j)$ is a potential.
For some limit periodic potentials $v = (v_j)$, $H_v$ has thick point spectrum in $[0,1]$ (see \cite{Poschel} for details). 

\ 

\noindent {\bf Continuous one-dimensional Schr\"odinger operators.} Consider the continuous one-dimensional Schr\"odinger operator
\[H_V := -\frac{d^2}{dx^2} + V\]
acting in an appropriate domain, where $V$ is a real-valued multiplication operator. 

Let $(k_j)_j$ be an arbitrary sequence of positive real numbers. Then, by Theorem~2 in~\cite{SimonProcams}, there exists a potential~$V$ on $[0,\infty)$ so that $(k_j^2)_j$ are eigenvalues of $H_V$ on $[0,\infty)$. In this case, if $\{k_j^2\}_j$ is dense in $[0,\infty)$, then $H_V$ has thick point spectrum in $[0,\infty)$.   

\ 

\noindent {\bf Quantum Magnetic Hamiltonians.} Let $H(a)$ be the Hamiltonian of a spinless particle moving in two dimensions in a radially symmetric magnetic field $B(r)$, that is,
\[H(a) = (-i\partial/\partial x - a_x)^2 + (-i\partial/\partial y - a_y)^2,\]
where $B = \partial_x a_y - \partial_x a_x$. If there exists  $0<\alpha<1$ such that $B(r) \sim  r^\alpha$ for large $r=\sqrt{x^2+y^2}$, then $H(a)$ has thick point spectrum in $[0,\infty)$. 

%%%%%%%%%%%%%%%%%%%%%%%%%%%%%%%%%%%%%%%%%%%%%%%%%%%%%%%%%%%%%%%%%%%%%%%%%%%%%%%%%%%%%%%%%%%%%%%%%%%%%%%%%%%%%%%%%%%%%%%%%%%%%%%%%%%%%%%%%%%%%%%%%%%%%%%%%%%%%%%%%%%%%%%%%%%%%%%%%%%%%%%%%%%%%%%%%%%%%%%%%%%%%%%%%%%%%%%%%%%%%%%%%%%%%%%%%%%%%%%%%%%--Preliminaries--%%%%%%%%%%%%%%%%%%%%%%%%%%%%%%%%%%%%%%%%%%%%%%%%%%%%%%%%%%%%%%%%%%%%%%%%%%%%%%%%%%%%%%%%%%%%%%%%%%%%%%%%%%%%%%%%%%%%%%%%%%%%%%%%%%%%%%%%%%%%%%%%%%%%%%%%%%%%%%%%%%%%%%%%%%%%%%%%%%%%%%%%%%%%%%%%%%%%%%%%%%%%%%%%%%%%%%%%%%%%%%%%%%%%%%%%%%%%%%%%%%%%%%%%%%%%%%%%%%%%%

\section{Preliminaries}\label{secpre}

\noindent In this section we fix some notation and present auxiliary results.

%%%%%%%%%%%%%%%%%%%%%%%%%%%%%%%%%%%%%%%%%%%%%%%%%%%%%%%%%%%%%%%%%%%%%%%%%%%%%%%%%%%%%%%%%%%%%%%%%%%%%%%%%%%%%%%%%%%%%

\subsection{Fractal dimensions}

\noindent Recall that the pointwise upper scaling exponent of~$\mu$  at $x \in \mathbb{R}$ is defined as  
\[ d_\mu^+(x) := \limsup_{\epsilon \downarrow 0} \frac{\ln \mu (B(x,\epsilon))}{\ln \epsilon},\]
if, for all $\epsilon>0$,  $\mu(B(x;\epsilon))> 0$; if not, $d_\mu^+(x) := \infty$.

\begin{definition}\label{defHPD}{\rm The upper packing dimension of~$\mu$  is defined as}
\[{\rm dim}^+_{{\rm P}} (\mu)  := \mu{\rm {\text -ess}}\sup d_\mu^+(x).\]
\end{definition}

\begin{definition}\label{defGFD}{\rm Let $q\in{\mathbb{R}}\setminus\{ 1\}$. The lower and upper $q$-generalized fractal dimensions of~$\mu$  are defined, respectively, as  
\[D_\mu^-(q) := \liminf_{\epsilon \downarrow 0} \frac{\ln [\int \mu (B(x,\epsilon))^{q-1} {\mathrm d}\mu(x)]}{(q-1) \ln \epsilon} \quad{\rm  and }\quad D_\mu^+(q) := \limsup_{\epsilon \downarrow 0} \frac{\ln [\int \mu (B(x,\epsilon))^{q-1} {\mathrm d}\mu(x)]}{(q-1)\ln \epsilon},\]
with integrals over supp$(\mu)$.}
\end{definition}

\begin{definition}{\rm Let $q\in{\mathbb{R}}\setminus\{ 1\}$. The lower and upper mean-$q$ dimensions of~$\mu$ are defined, respectively, as}  
\[m_\mu^-(q) := \liminf_{\epsilon \downarrow 0} \frac{\ln [\epsilon^{-1} \int_{\mathbb{R}} \mu (B(x,\epsilon))^q \, {\mathrm d}x]}{(q-1) \ln \epsilon} \quad {\rm  and } \quad m_\mu^+(q) := \limsup_{\epsilon \downarrow 0} \frac{\ln [\epsilon^{-1} \int_{\mathbb{R}} \mu (B(x,\epsilon))^q \, {\mathrm d}x]}{(q-1)\ln \epsilon}.\]
\end{definition}

\begin{remark}\label{GFDrem}{\rm If~$\mu$ has  bounded support, then for all $q \in (0,1)$, $0 \leq D_\mu^-(q) \leq D_\mu^+(q) \leq 1$. Moreover, it is possible to show  that for $q>0$, $q \not = 1$, $D_\mu^{\mp}(q) = m_\mu^{\mp}(q)$; see \cite{Barbaroux2} for a detailed discussion.}
\end{remark}

%%%%%%%%%%%%%%%%%%%%%%%%%%%%%%%%%%%%%%%%%%%%%%%%%%%%%%%%%%%%%%%%%%%%%%%%%%%%%%%%%%%%%%%%%%%%%%%%%%%

\subsection{Dynamical characterization of fractal dimensions}

\noindent Let $r>0$ and let~$\mu$ be a finite positive Borel measure on $\mathbb{R}$ so that supp$(\mu) \subset [-r,r]$. Consider, for every $t >0$ and every $q \in \mathbb{R}$,
\[C_\mu(q,t) := t \int_{-r-1}^{r+1} \biggr( \int_{{\mathbb{R}}} e^{-t|x-y|} {\mathrm d}\mu(y) \biggr)^q {\mathrm d}x.\]

\begin{proposition}\label{propGFD1} Let~$\mu$ be as before and $q>0$, $q \not = 1$. Then,
\begin{equation*}
\liminf_{t \to \infty} \frac{\ln C_\mu(q,t)}{(q-1)\ln t} = -D_\mu^+(q),
\end{equation*}
\begin{equation*}
\limsup_{t \to \infty} \frac{\ln C_\mu(q,t)}{(q-1)\ln t} = -D_\mu^-(q).
\end{equation*}
\end{proposition}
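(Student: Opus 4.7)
The strategy is to compare $C_\mu(q,t)$, up to multiplicative constants depending only on $q$, with the mean-$q$ integral $\epsilon^{-1}\int \mu(B(x,\epsilon))^q\,{\mathrm d}x$ evaluated at $\epsilon = 1/t$, and then appeal to Remark~\ref{GFDrem} to identify $m_\mu^\pm(q)$ with $D_\mu^\pm(q)$. The sign flip $\ln \epsilon = -\ln t$, combined with $\liminf(-X) = -\limsup X$, is what swaps upper and lower: the $\liminf$ in $t$ pairs with $m_\mu^+ = D_\mu^+$, while the $\limsup$ in $t$ pairs with $m_\mu^- = D_\mu^-$, matching the statement.

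The first technical step is the integration-by-parts (or Fubini, via $e^{-t|x-y|} = t\int_{|x-y|}^\infty e^{-tu}\,{\mathrm d}u$) identity
\[I(x,t) := \int_{\mathbb{R}} e^{-t|x-y|}\,{\mathrm d}\mu(y) = \int_0^\infty e^{-s}\,\mu(B(x,s/t))\,{\mathrm d}s.\]
The lower bound is then immediate: restricting to $s\geq 1$ and using monotonicity of $s\mapsto \mu(B(x,s/t))$ gives $I(x,t)\geq e^{-1}\mu(B(x,1/t))$, and hence $C_\mu(q,t) \geq e^{-q}\, t\int \mu(B(x,1/t))^q\,{\mathrm d}x$. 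Since $\mathrm{supp}(\mu) \subset [-r,r]$, for $t>1$ the integrand is supported inside $[-r-1,r+1]$, so the restricted domain in the definition of $C_\mu$ causes no loss.

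The heart of the argument is the matching upper bound, which requires a case split. For $q\geq 1$, Jensen's inequality against the probability measure $e^{-s}\,{\mathrm d}s$ yields $I(x,t)^q \leq \int_0^\infty e^{-s}\,\mu(B(x,s/t))^q\,{\mathrm d}s$. For $0<q<1$, I would first discretize as $I(x,t) \leq \sum_{k\geq 0} e^{-k}\mu(B(x,(k+1)/t))$ and then invoke the subadditivity $(\sum a_k)^q \leq \sum a_k^q$. In both regimes, integrating in $x$ and using the elementary one-dimensional fact that $B(x,r)$ is covered by $O(rt)$ balls of radius $1/t$ centered at translates $x+c_j$, together with the power-mean inequality (for $q\geq 1$) or subadditivity (for $0<q<1$), yields $\int \mu(B(x,r))^q\,{\mathrm d}x \leq C(rt)^{\max(q,1)}\int \mu(B(y,1/t))^q\,{\mathrm d}y$ after translation invariance of Lebesgue measure removes the shifts. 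Summing (or integrating) the exponential weight against this polynomial factor produces a finite $q$-dependent constant, so $C_\mu(q,t) \leq C_q\, t\int \mu(B(y,1/t))^q\,{\mathrm d}y$. This step is the main obstacle: it requires the case-dependent choice of convexity tool and a covering argument whose polynomial over-count must be absorbed by the exponential decay coming from $e^{-t|x-y|}$, so that the constants do not grow with $t$.

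Putting the two bounds together, $\ln C_\mu(q,t) = \ln\bigl[\epsilon^{-1}\int \mu(B(x,\epsilon))^q\,{\mathrm d}x\bigr] + O_q(1)$ with $\epsilon = 1/t$. Dividing by $(q-1)\ln t = -(q-1)\ln\epsilon$ and passing to $\liminf$ (respectively $\limsup$) as $t\to\infty$, the bounded remainder vanishes and the limits are identified with $-m_\mu^+(q) = -D_\mu^+(q)$ and $-m_\mu^-(q) = -D_\mu^-(q)$ by Remark~\ref{GFDrem}, concluding the proof.
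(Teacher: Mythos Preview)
Your argument is correct and, for the lower bound, essentially coincides with the paper's: both obtain $C_\mu(q,t)\ge e^{-q}\,t\int \mu(B(x,1/t))^q\,{\mathrm d}x$, you via the layer-cake identity $I(x,t)=\int_0^\infty e^{-s}\mu(B(x,s/t))\,{\mathrm d}s$, the paper by the bare restriction to $|x-y|<1/t$. The genuine difference is in the upper bound. The paper does \emph{not} attempt a sharp comparison at scale $1/t$; instead it introduces a second scale $1/t^{1-\delta}$, splits $I(x,t)\le \mu(B(x,1/t^{1-\delta}))+e^{-t^\delta}\mu(\mathbb R)$, uses the crude $(a+b)^q\le 2^q(a^q+b^q)$ (valid for all $q>0$, so no case split), absorbs the super-polynomially small second term, and only at the very end lets $\delta\downarrow 0$ after dividing by $(q-1)\ln t$. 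Your route is more laborious---Jensen for $q\ge1$, subadditivity for $0<q<1$, plus a covering argument to reduce $\mu(B(x,s/t))^q$ to scale $1/t$---but it yields the stronger two-sided estimate $C_\mu(q,t)\asymp_q t\int\mu(B(x,1/t))^q\,{\mathrm d}x$ with constants independent of~$t$, whereas the paper only obtains equality of the log-scale exponents. In short: the paper trades sharpness for simplicity via an auxiliary parameter; your approach trades simplicity for a sharper pointwise comparison.
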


Although natural to specialists, we present a proof of this result for the convenience of the reader.

\begin{proof}[{Proof} {\rm (Proposition~\ref{propGFD1})}] We will show that
\begin{equation}\label{GFD1}
\liminf_{t \to \infty} \frac{\ln C_\mu(q,t)}{(q-1)\ln t} = -m_\mu^+(q),
\end{equation}
\begin{equation}\label{GFD2}
\limsup_{t \to \infty} \frac{\ln C_\mu(q,t)}{(q-1)\ln t} = -m_\mu^-(q).
\end{equation}

Since supp$(\mu) \subset [-r,r]$, one has that, for each $t>1$ and each $x \in [-r-1,r+1]^c$,  $\mu(B(x,\frac{1}{t})) = 0$. Hence, it follows that, for  $t>1$,
\begin{eqnarray*} C_\mu(q,t) &=& t \int_{-r-1}^{r+1} \biggr( \int_{{\mathbb{R}}} e^{-t|x-y|} {\mathrm d}\mu(y) \biggr)^q {\mathrm d}x \geq t \int_{-r-1}^{r+1} \biggr( \int_{|x-y|<\frac{1}{t}}  e^{-t|x-y|} {\mathrm d}\mu(y) \biggr)^q {\mathrm d}x \\ &\geq& \frac{t}{e^q} \int_{-r-1}^{r+1} \mu(B(x,\frac{1}{t}))^q {\mathrm d}x = \frac{t}{e^q} \int_{\mathbb{R}} \mu(B(x,\frac{1}{t}))^q {\mathrm d}x
\end{eqnarray*}
and, therefore,
\[\liminf_{t \to \infty} \frac{\ln C_\mu(q,t)}{(q-1)\ln t} \leq -m_\mu^+(q),\qquad\limsup_{t \to \infty} \frac{\ln C_\mu(q,t)}{(q-1)\ln t} \leq  -m_\mu^-(q).\]

Let $0< \delta < 1$. Then, for each $x \in \mathbb{R}$ and  $t>0$,
\begin{eqnarray}
\nonumber \!\!\! \int_{{\mathbb{R}}} e^{-t|x-y|} {\mathrm d}\mu(y) &=&  \int_{|x-y|<\frac{1}{t^{1-\delta}}}  e^{-t|x-y|} {\mathrm d}\mu(y) + \int_{|x-y|\geq\frac{1}{t^{1-\delta}}}  e^{-t|x-y|} {\mathrm d}\mu(y)\\
\nonumber \!\!\!  &\leq&  \mu\big(B\big(x,\frac{1}{t^{1-\delta}}\big)\big) +   e^{-t^\delta} \mu(\mathbb{R}).
\end{eqnarray} 
Therefore,
\begin{eqnarray}\label{DFG3}
 \nonumber \biggr( \int_{{\mathbb{R}}} e^{-t|x-y|} {\mathrm d}\mu(y) \biggr)^q &\leq& 2^q \max\left\{\mu\big(B\big(x,\frac{1}{t^{1-\delta}}\big)\big),\mu({\mathbb{R}}) e^{-t^\delta}\right\}^q\\
  &\leq& 2^q \mu\big(B\big(x,\frac{1}{t^{1-\delta}}\big)\big)^q +   2^q \mu({\mathbb{R}})^q e^{-qt^\delta} .
\end{eqnarray} 
Since $m_\mu^-(q) \geq 0$ (see Remark~\ref{GFDrem}), by (\ref{DFG3}), one gets, for sufficiently large~$t$,
\begin{eqnarray*}
C_\mu(q,t)  &\leq&   2^q t \int_{{\mathbb{R}}} \mu\big(B\big(x,\frac{1}{t^{1-\delta}}\big)\big)^q {\mathrm d}x + (2r + 2) 2^q  \mu({\mathbb{R}})^q t e^{-qt^\delta}\\ &\leq& 2^{q+ 1} t \int_{{\mathbb{R}}} \mu\big(B\big(x,\frac{1}{t^{1-\delta}}\big)\big)^q \,{\mathrm d}x.
\end{eqnarray*}
Thus,
\[(1-\delta)\liminf_{t \to \infty} \frac{\ln C_\mu(q,t)}{(q-1)\ln t} \geq -m_\mu^+(q),\]
\[(1-\delta)\limsup_{t \to \infty} \frac{\ln C_\mu(q,t)}{(q-1)\ln t} \geq  -m_\mu^-(q).\]
Since $0 < \delta < 1$ is arbitrary, the complementary inequalities in~\eqref{GFD1} and~\eqref{GFD2} follow. The results are now a consequence of Remark~\ref{GFDrem}.
\end{proof}

%%%%%%%%%%%%%%%%%%%%%%%%%%%%%%%%%%%%%%%%%%%%%%%%%%%%%%%%%%%%%%%%%%%%%%%%%%%%%%%%%%%%%%%%%%%%%%%%%%%%%%%%%%%%%%%%%%%%%%%%%%%%%%%%%%%%%%%%%%%%%%%%%%%%%%%%%%%%%%%%%%%%%%%%%%%%%%%%%%%%%%%%%%%%%%%%%%%%%%%%%%%%%%%%%%%%%%%%%%%%%%%%%%%%%%%%%%%%%%%%%%%--Lower--bounds--and--fractal--dimensions--%%%%%%%%%%%%%%%%%%%%%%%%%%%%%%%%%%%%%%%%%%%%%%%%%%%%%%%%%%%%%%%%%%%%%%%%%%%%%%%%%%%%%%%%%%%%%%%%%%%%%%%%%%%%%%%%%%%%%%%%%%%%%%%%%%%%%%%%%%%%%%%%%%%%%%%%%%%%%%%%%%%%%%%%%%%%%%%%%%%%%%%%%%%%%%%%%%%%%%%%%%%%%%%%%%%%%%%%%%%%%%%%%%%%%%%%%%%%

\section{Lower bounds and fractal dimensions}\label{secmain}

\noindent In this section, our main goal is to prove Theorem~\ref{maintheorem}. We begin investigating the existence of $G_\delta$ sets.

%%%%%%%%%%%%%%%%%%%%%%%%%%%%%%%%%%%%%%%%%%%%%%%%%%%%%%%%%%%%%%%%%%%%%%%%%%%%%%%%%%%%%%%%%%%%%%%%%%%%%%%%%%%%%%%%%%%%%%%%%%%%%%%%%%%%%%%%%%%%%%%%%

\subsection{$G_\delta$ sets} 

\begin{proposition}\label{propgdelta} Let~$T$ be a bounded self-adjoint operator on~$\mathcal{H}$ and  $q \in (0,1)$. Then, for each $\gamma \geq 0$,
\begin{enumerate}
\item[i)] $G_{\gamma^-}^T := \{\xi \in {\mathcal{H}} \mid  D_{\mu_\xi^T}^-(q) \leq  \gamma \}$ is a $G_\delta$ set in $\mathcal{H}$,
\item[ii)] $G_{\gamma^+}^T := \{\xi \in {\mathcal{H}} \mid D_{\mu_\xi^T}^+(q) \geq  \gamma \}$ is a $G_\delta$ set in $\mathcal{H}$. 
\end{enumerate}
\end{proposition}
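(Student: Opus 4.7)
The plan is to translate the conditions defining $G_{\gamma^\pm}^T$ into countable families of open inequalities on $\xi$, using the dynamical characterization in Proposition~\ref{propGFD1}. Since $T$ is bounded, set $r=\Vert T\Vert$ so that $\mathrm{supp}(\mu_\xi^T)\subset[-r,r]$ uniformly in $\xi\in\mathcal{H}$; for $q\in(0,1)$, using $q-1<0$, the proposition reads
\[
D_{\mu_\xi^T}^-(q)=\liminf_{t\to\infty}\frac{\ln C_{\mu_\xi^T}(q,t)}{(1-q)\ln t},\qquad D_{\mu_\xi^T}^+(q)=\limsup_{t\to\infty}\frac{\ln C_{\mu_\xi^T}(q,t)}{(1-q)\ln t}.
\]
The whole proof then reduces to showing that, for each fixed $t>0$, the map $\Phi_t:\mathcal{H}\to[0,\infty)$ defined by $\Phi_t(\xi):=C_{\mu_\xi^T}(q,t)$ is continuous in $\xi$, together with a similar continuity in $t$.

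For the continuity in $\xi$, I would use the functional calculus to recognize the inner integral as $\langle\xi,e^{-t|x-T|}\xi\rangle$, a continuous quadratic form in $\xi$ uniformly bounded by $\Vert\xi\Vert^2$. Since $q>0$, raising to the power $q$ preserves continuity and, on any norm-bounded neighborhood of a limit point $\xi$, gives a uniform-in-$x$ domination by a constant; dominated convergence on the finite interval $[-r-1,r+1]$ then yields the continuity of $\Phi_t$. The same type of dominated-convergence argument, applied in the variable $t$, shows that $t\mapsto\Phi_t(\xi)$ is continuous for each fixed $\xi\neq 0$, which allows the real-variable $\liminf$ and $\limsup$ in the formulas above to be computed along $t\in\mathbb{Q}$ without change of value.

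Combining these facts with the standard characterizations of $\liminf\le\gamma$ and $\limsup\ge\gamma$, and restricting to $t>1$ so that $\ln t>0$, one arrives at the explicit $G_\delta$ presentations
\[
G_{\gamma^-}^T=\bigcap_{k\ge1}\bigcap_{N\ge1}\bigcup_{\substack{t\in\mathbb{Q}\\t>N}}\bigl\{\xi\in\mathcal{H}:\Phi_t(\xi)<t^{(1-q)(\gamma+1/k)}\bigr\},
\]
\[
G_{\gamma^+}^T=\bigcap_{k\ge1}\bigcap_{N\ge1}\bigcup_{\substack{t\in\mathbb{Q}\\t>N}}\bigl\{\xi\in\mathcal{H}:\Phi_t(\xi)>t^{(1-q)(\gamma-1/k)}\bigr\}.
\]
Each inner set is open by continuity of $\Phi_t$, each countable union is open, and the countable intersection over $(k,N)\in\mathbb{N}^2$ is then $G_\delta$. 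The boundary point $\xi=0$ is handled consistently: $\Phi_t(0)=0$ satisfies every strict $<$-inequality (so $0\in G_{\gamma^-}^T$, matching the natural convention $D^-_{\mu_0^T}=-\infty$) but no strict $>$-inequality (so $0\notin G_{\gamma^+}^T$ for $\gamma\ge0$).

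The one genuine technical point is the continuity of $\Phi_t$ in $\xi$, in particular the uniform-in-$x$ domination justifying interchange of limit and integral on $[-r-1,r+1]$; everything else is elementary bookkeeping with countable intersections and unions, together with the observation that strict inequalities between continuous functions define open sets.
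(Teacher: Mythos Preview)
Your proof is correct and follows essentially the same route as the paper: both use the dynamical characterization of Proposition~\ref{propGFD1} together with the continuity of $\xi\mapsto C_{\mu_\xi^T}(q,t)$ to rewrite the $\liminf/\limsup$ conditions as countable intersections of open sets. The only cosmetic differences are that the paper works with $t^{\gamma+1/j}C_{\mu_\xi^T}(q,t)^{1/(q-1)}$ and expresses the condition as a $\limsup=\infty$, taking an (uncountable, hence still open) union over real $t\ge k$, whereas you keep $\Phi_t$ itself, use explicit strict inequalities, and add the harmless extra step of continuity in $t$ to reduce to rational $t$; you are also a bit more careful about the functional-calculus justification of continuity and the edge case $\xi=0$.
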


\begin{proof} We just present the proof of item~{\it i)}. For each $j \geq 1 $, let $g_j:(0,\infty)\rightarrow(0,\infty)$, $g_j(t) := t^{\frac{1}{j}+\gamma}$. Since, for each $j\geq 1$ and each $t > 0$, the mapping
\[{\mathcal{H}} \ni \xi \mapsto g_j(t) C_{\mu_\xi^T}(q,t)^{1/(q-1)}\]
is continuous (by dominated convergence), it follows that, for each $j\geq 1$,  $k\geq 1$, and  $n\geq 1$, the set
\[\bigcup_{t\geq k} \{\xi \in {\mathcal{H}} \mid g_j(t) C_{\mu_\xi^T}(q,t)^{1/(q-1)} > n \}\]
is open; thus, by Proposition~\ref{propGFD1},

\begin{eqnarray*}
G_{\gamma^-}^T &=& \bigcap_{j\geq 1}  \{\xi \in {\mathcal{H}} \mid \limsup_{t \to \infty} g_j(t) C_{\mu_\xi^T}(q,t)^{1/(q-1)} = \infty \}\\ &=&  \bigcap_{j\geq 1}  \bigcap_{n\geq 1} \bigcap_{k \geq 1} \bigcup_{t\geq k} \{\xi \in {\mathcal{H}} \mid g_j(t) C_{\mu_\xi^T}(q,t)^{1/(q-1)} > n\}
\end{eqnarray*}
is a $G_\delta$ set in $\mathcal{H}$.
\end{proof}
 
%%%%%%%%%%%%%%%%%%%%%%%%%%%%%%%%%%%%%%%%%%%%%%%%%%%%%%%%%%%%%%%%%%%%%%%%%%%%%%%%%%%%%%%%%%%%%%%%%%%%%%%%%%%%%%%%%%%%%%%%%%%%%%%%%%%%%%%%%%%%%%%%%%%%%%%%%%

\subsection{Generic minimal $D_{\mu_\xi^T}^-(q)$ and maximal $D_{\mu_\xi^T}^+(q)$}

\noindent Next, we  relate some spacing properties of the eigenvalues of self-adjoint operators with pure point spectrum to the generalized fractal dimensions of their spectral measures. The typical value of such dimensions (in Baire's sense) is obtained if the sequences of eigenvalues of these operators are weakly-spaced (see Definition~\ref{def1}).

\begin{theorem}\label{theoGFD}Let~$T$ be a bounded self-adjoint operator  on~$\mathcal{H}$ with pure point spectrum. Suppose that the sequence of eigenvalues of~$T$ is weakly-spaced. Then, for each $q \in (0,1)$,  
\[\{ \xi \in {\mathcal{H}} \mid D_{\mu_\xi^T}^-(q) = 0 \;\; and \;\; D_{\mu_\xi^T}^+(q) = 1\}\]
is a dense $G_\delta$ set in $\mathcal{H}$.
\end{theorem}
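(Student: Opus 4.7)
Set $A := \{\xi \in \mathcal{H} \mid D^-_{\mu_\xi^T}(q) \leq 0\}$ and $B := \{\xi \in \mathcal{H} \mid D^+_{\mu_\xi^T}(q) \geq 1\}$. Since $T$ is bounded, every spectral measure is compactly supported, so Remark~\ref{GFDrem} gives $0 \leq D^-_{\mu_\xi^T}(q) \leq D^+_{\mu_\xi^T}(q) \leq 1$; hence $A \cap B$ coincides with the set in the statement. Both sets are $G_\delta$ by Proposition~\ref{propgdelta}, so by the Baire category theorem it suffices to show that each is dense in $\mathcal{H}$.

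Density of $A$ is immediate. Finite linear combinations $\xi = \sum_{n=1}^{N} c_n e_n$ of eigenvectors of $T$ form a dense subspace, and for any such $\xi$ the integral $\int \mu_\xi(B(x,\epsilon))^{q-1}\,d\mu_\xi(x) = \sum_{n=1}^N |c_n|^{2q}$ is a positive constant once $\epsilon$ is smaller than the minimum pairwise gap of the supporting eigenvalues; dividing this by $(q-1)\ln \epsilon \to +\infty$ (recall $q < 1$) yields $D^-_{\mu_\xi^T}(q) = 0$.

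The substantive part is the density of $B$, where the weakly-spaced hypothesis enters. Fix $\xi = \sum_n c_n e_n \in \mathcal{H}$ and $\delta > 0$. For each $\alpha > 0$, let $(\lambda_{j_l})$ be the subsequence from Definition~\ref{def1}, with monotone gaps $c_l := \lambda_{j_l} - \lambda_{j_{l+1}} \geq C_\alpha / l^{1+\alpha}$, and let $(e_{j_l})$ denote the associated normalized eigenvectors. Choose complex $b_l$ with $|b_l|^2 = K/(l\log^2 l)$, with $K > 0$ chosen so that $\sum_l |b_l|^2 < \delta^2$, and phases aligned with those of $c_{j_l}$ so that $|c_{j_l} + b_l|^2 \geq |b_l|^2$. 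Define $\xi'_\alpha := \xi + \sum_l b_l e_{j_l}$; then $\|\xi'_\alpha - \xi\| < \delta$, and as measures $\mu_{\xi'_\alpha}^T \geq \nu_\alpha := \sum_l |b_l|^2 \delta_{\lambda_{j_l}}$. The crucial feature of these weights is that $\sum_l |b_l|^{2q} = +\infty$ for every $q \in (0,1)$. Using Proposition~\ref{propGFD1}, one estimates $C_{\nu_\alpha}(q,t)$ from below by restricting the outer integral to disjoint intervals of radius $\sim 1/t$ around each $\lambda_{j_l}$ with $l \leq L(t) := \max\{l : c_l \geq 1/t\}$, using on each the trivial lower bound $\int e^{-t|x-y|}\,d\nu_\alpha(y) \geq |b_l|^2 e^{-t|x - \lambda_{j_l}|}$. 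The weakly-spaced estimate yields $L(t) \gtrsim t^{1/(1+\alpha)}$, and the chosen $|b_l|^2$ give
\[
C_{\nu_\alpha}(q,t) \gtrsim \sum_{l \leq L(t)} |b_l|^{2q} \gtrsim \frac{t^{(1-q)/(1+\alpha)}}{(\log t)^{2q}},
\]
whence $D^+_{\nu_\alpha}(q) \geq 1/(1+\alpha)$. Monotonicity of $\mu \mapsto C_\mu(q,t)$ transfers this to $D^+_{\mu_{\xi'_\alpha}^T}(q) \geq 1/(1+\alpha)$, so the $G_\delta$ set $\{\xi : D^+_{\mu_\xi^T}(q) \geq 1/(1+\alpha)\}$ is dense. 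Intersecting over a sequence $\alpha_k \downarrow 0$ and invoking Baire produces a dense $G_\delta$ subset of $B$, proving density of $B$.

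The technical heart of the argument is the lower bound on $C_{\nu_\alpha}(q,t)$ along the scales $t \sim 1/c_l$: one must choose atoms, weights, and scales compatibly so that the $1/t$-neighborhoods are disjoint for enough indices, and so that the $L^q$-divergence $\sum_l |b_l|^{2q} = \infty$ converts quantitatively into polynomial growth in $t$ via the spacing bound. The phase alignment of the $b_l$ is essential to transfer the estimate from $\nu_\alpha$ to $\mu_{\xi'_\alpha}^T$ through measure-monotonicity, avoiding destructive cancellation with the coefficients of $\xi$.
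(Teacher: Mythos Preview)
Your proof is correct and follows the same global architecture as the paper---reduce to showing density of each of the two $G_\delta$ sets from Proposition~\ref{propgdelta}, then exhibit explicit approximants---but the construction you use for the $D^+$ part is genuinely different. The paper approximates $\xi$ by \emph{truncating} it to its first $k$ eigencomponents and then appending an orthogonal tail $\sum_{l\ge r(k)} l^{-(1+1/n)/2} e_{j_l}$ along the weakly-spaced subsequence (with $\alpha=1/n$); it then evaluates the defining integral $\int \mu_{\xi_k}(B(x,\epsilon_m))^{q-1}\,d\mu_{\xi_k}$ directly at the discrete scales $\epsilon_m = \tfrac12(\lambda_{j_m}-\lambda_{j_{m+1}})$, obtaining $D^+_{\mu_{\xi_k}}(q)\ge t_{n,q}:=\frac{1-(1+1/n)q}{(1-q)(1+1/n)}\to 1$. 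Your route instead \emph{perturbs} $\xi$ additively with phase-aligned atoms of weight $|b_l|^2=K/(l\log^2 l)$, invokes the measure domination $\mu_{\xi'_\alpha}^T\ge\nu_\alpha$, and then bounds $C_{\nu_\alpha}(q,t)$ from below via Proposition~\ref{propGFD1}; monotonicity of $\mu\mapsto C_\mu(q,t)$ (valid since $q>0$) together with the sign of $q-1$ transfers the bound to $\mu_{\xi'_\alpha}^T$.

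Each choice buys something. The paper's truncation-plus-tail avoids any interference between the approximant's atoms and those of $\xi$, at the cost of the bookkeeping indices $r(k),M(k),s(k)$ and a computation tied to the specific weight $l^{-(1+1/n)}$. Your phase-alignment trick sidesteps that bookkeeping entirely and, via measure monotonicity, cleanly decouples the estimate from the unknown coefficients of~$\xi$; the logarithmic weight makes $\sum|b_l|^{2q}=\infty$ for \emph{every} $q\in(0,1)$ with a single perturbation, so only the intersection over $\alpha\downarrow 0$ remains. On the other hand, your argument leans on the dynamical characterization (Proposition~\ref{propGFD1}) as an intermediary, whereas the paper works straight from Definition~\ref{defGFD}. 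Both reach the same threshold $1/(1+\alpha)$ (equivalently $t_{n,q}$) before passing to the limit.
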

\begin{proof} Fix $0<q<1$ and let  $(e_j)$ be an orthonormal family of eigenvectors of~$T$, that is, $Te_j = \lambda_j e_j$ for every $j\ge 1$.  

Let $(b_j) \subset \mathbb{C}$ be a sequence such that $|b_j| >0$, for all $j \geq 1$,  and $\sum_{j=1}^\infty \vert b_j \vert^{2q} < \infty$.  Given $\xi \in \mathcal{H}$, write $\xi = \sum_{j=1}^\infty a_j e_j$, and then consider, for each $k \geq 1$, 
\[\xi_k := \displaystyle\sum_{j=1}^k a_j e_j + \displaystyle\sum_{j=k+1}^\infty b_j e_j.\]
It is clear that $\xi_k \rightarrow \xi$. Moreover, for  $k \geq 1$ and each $\epsilon>0$,
\begin{eqnarray}\label{theoGFD1}
\nonumber \int_{{\rm supp}(\mu_{\xi_k}^T)} \mu_{\xi_k}^T(B(x,\epsilon))^{q-1} {\mathrm d}\mu_{\xi_k}^T(x) &=& \displaystyle\sum_{j=1}^\infty \mu_{\xi_k}^T(B(\lambda_j,\epsilon))^{q-1} \mu_{\xi_k}^T (\{\lambda_j\})\\ &\leq& \displaystyle\sum_{j=1}^\infty \mu_{\xi_k}^T (\{\lambda_j\})^q = \displaystyle\sum_{j=1}^k \vert a_j \vert^{2q} + \displaystyle\sum_{j=k+1}^\infty \vert b_j \vert^{2q},
\end{eqnarray}
from which it follows that $ D_{\mu_{\xi_k}^T}^\mp(q) = 0$. Hence,  $G_{0^-}^T = \{ \xi \in {\mathcal{H}} \mid D_{\mu_\xi^T}^-(q) = 0 \}$ is a dense set and, therefore, by Proposition~\ref{propgdelta}, a dense $G_\delta$ set in $\mathcal{H}$.

Now we pass to the upper dimensions. Fix an $n\in\mathbb{N}$ with $n > \frac{q}{1-q}$ and let $(\lambda_{j_l})$ be a subsequence of $(\lambda_j)$ so that: i)~$\lim_{l\to\infty} (\lambda_{j_{l}} - \lambda_{j_{l+1}}) =0$ monotonically;  ii)~there exists a $C_n > 0$ such that, for every~$l \geq 1$, $ \lambda_{j_{l}} - \lambda_{j_{l+1}} \geq C_n/l^{1+\frac{1}{n}}$. Consider, for each $k \geq 1$,
\[\xi_k := \displaystyle\sum_{l=1}^k a_l e_l + \displaystyle\sum_{l=r(k)}^\infty \frac{1}{\sqrt{l^{1+\frac{1}{n}}}} e_{j_l},\]
where we set $r(k)$ large enough so that $\{e_1, ...., e_k, e_{j_{r(k)}}, e_{j_{r(k)+1}},...\}$ is an orthonormal set. Again, $\xi_k \rightarrow \xi$ in $\mathcal{H}$. 

For each $m \geq 1$, put $\epsilon_m := \vert \lambda_{j_{m}} - \lambda_{j_{m+1}} \vert/2$. Then, for each $m>M(k)$ and each $1 \leq l \leq  m$,
\[\mu_{\xi_k}^T(B(\lambda_{j_l},\epsilon_m)) = \mu_{\xi_k}^T (\{\lambda_{j_l}\}),\]
where $M(k)$ is large enough so that for each $m>M(k)$, each $l \geq 1$ and each $1\leq i\leq k$, $\lambda_i \not \in B(\lambda_{j_l},\epsilon_m)$. Hence, for $m > \max \{M(k),r(k)\}=:s(k)$, 

\begin{eqnarray*}
\int_{{\rm supp}(\mu_{\xi_k}^T)} \mu_{\xi_k}^T(B(x,\epsilon_m))^{q-1} {\mathrm d}\mu_{\xi_k}^T(x) &=& \displaystyle\sum_{l=1}^\infty \mu_{\xi_k}^T(B(\lambda_l,\epsilon_m))^{q-1} \mu_{\xi_k}^T (\{\lambda_l\})\\ &\geq& \displaystyle\sum_{l=s(k)}^{m}  \mu_{\xi_k}^T(B(\lambda_{j_l},\epsilon_m))^{q-1} \mu_{\xi_k}^T (\{\lambda_{j_l}\})\\ &=& \displaystyle\sum_{l=s(k)}^{m}  \mu_{\xi_k}^T(\{\lambda_{j_l}\})^q = \displaystyle\sum_{l=s(k)}^{m} \frac{1}{l^{(1+\frac{1}{n})q}}\\ &\geq& E_k \, m^{1-(1+\frac{1}{n})q} \geq E_k\left(\frac{C_n}{2\epsilon_m}\right)^{(1-(1+\frac{1}{n})q)/(1+\frac{1}{n})},
\end{eqnarray*}
where $E_k$ is a constant depending only of $k$, which results in
\[ D_{\mu_{\xi_k}^T}^+(q) \geq  \frac{1-(1+\frac{1}{n})q}{(1-q)(1+\frac{1}{n})} =: t_{n,q}. \]
Thus, $G_{(t_{n,q})^+}^T$  is a dense set and, therefore, by Proposition~\ref{propgdelta}, a dense $G_\delta$ set in $\mathcal{H}$. Since 
\[ G_{1^+}^T = \bigcap_{n > \frac{q}{1-q}} G_{(t_{n,q})^+}^T\]
 and $G_{1^+}^T = \{\xi \in {\mathcal{H}} \mid D_{\mu_\xi^T}^+(q) = 1 \}$ (see Remark~\ref{GFDrem}), the result is proven.
\end{proof}

%%%%%%%%%%%%%%%%%%%%%%%%%%%%%%%%%%%%%%%%%%%%%%%%%%%%%%%%%%%%%%%%%%%%%%%%%%%%%%%%%%%%%%%%%%%%%%%%%%%%%%%%%%%%%%%%%%%%%%%%%%%%%%%%%%%%%%%%%%%%%%%%%%%

\subsection{Proof of Theorem~\ref{maintheorem}}

\noindent Theorem~\ref{maintheorem} is an application of the next result (Theorem~\ref{maintheorem2}) to $P^T(\Lambda)\mathcal H$.

\begin{theorem}\label{maintheorem2} Let~$T$ be a bounded self-adjoint operator with pure point spectrum on~$\mathcal H$. Suppose that  the sequence of eigenvalues of~$T$ is weakly-spaced. Then,
\begin{enumerate}
\item [{\rm 1.}]  The set
\[G^T :=  \{ \xi \in {\mathcal{H}} \mid D_{\mu_\xi^{T}}^+(q) = 1 \;\text{ for each }  0<q<1 \}\]
is a dense $G_\delta$ subset of~${\mathcal{H}}$.
    
\item [{\rm 2.}] For  fixed  $\xi \in G^T$, each orthonormal basis $B=\{\xi_n\}_{n \in {\mathbb{Z}}}$ so that $\xi_0 := \Vert \xi \Vert^{-1} \xi$, and each $f \in C_c^\infty({\mathbb{R}})$, one has 
\[\alpha_B^+(f(T)\xi_0,p) = p, \;\;\text{ for each }\;  p>0.\]   
\end{enumerate}
\end{theorem}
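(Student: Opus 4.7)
My plan is to prove part (1) by a Baire-category argument applied to the dense $G_\delta$ sets produced by Theorem~\ref{theoGFD}, exploiting the monotonicity of $q\mapsto D_\mu^+(q)$ on $(0,1)$. Part (2) will then follow by inserting $\zeta:=f(T)\xi_0$ into the Barbaroux--Germinet--Tcheremchantsev bound (Theorem~\ref{theoBGT}), after verifying that the density transformation $\mu\mapsto |f|^2\mu$ preserves the value $D^+(q)=1$.

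\textbf{Part (1).} For each $q\in(0,1)$, Theorem~\ref{theoGFD} gives that $A_q:=\{\xi\in\mathcal H:D_{\mu_\xi^T}^+(q)=1\}$ is a dense $G_\delta$ set in $\mathcal H$. I would pick any sequence $q_n\nearrow 1$ in $(0,1)$; by Baire, the countable intersection $E:=\bigcap_{n\ge 1}A_{q_n}$ is again dense $G_\delta$. Combining the standard monotonicity $D_\mu^+(q)\ge D_\mu^+(q')$ for $q<q'$ in $(0,1)$ with the universal upper bound $D_\mu^+(q)\le 1$ from Remark~\ref{GFDrem}, for any $\xi\in E$ and any $q\in(0,1)$ we obtain $1\ge D_{\mu_\xi^T}^+(q)\ge D_{\mu_\xi^T}^+(q_n)=1$ as soon as $q_n>q$. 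Hence $E=G^T$, which is therefore a dense $G_\delta$ set.

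\textbf{Part (2) and the main obstacle.} Fix $\xi\in G^T$ and $f\in C_c^\infty(\mathbb R)$ with $\zeta:=f(T)\xi_0\ne 0$ (the case $\zeta=0$ is trivial). Proposition~\ref{propmoments}(iii) provides the upper bound $\alpha_B^+(\zeta,p)\le p$. For the reverse inequality, Theorem~\ref{theoBGT} applied to $\zeta$ yields
\[
\alpha_B^+(\zeta,p)\;\ge\;D_{\mu_\zeta^T}^+\bigl(\tfrac{1}{1+p}\bigr)\,p,
\]
so, setting $q:=1/(1+p)\in(0,1)$, it suffices to prove $D_{\mu_\zeta^T}^+(q)=1$. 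By the spectral theorem, $\mu_\zeta^T=|f|^2\mu_{\xi_0}^T$ with $\mu_{\xi_0}^T$ proportional to $\mu_\xi^T$, and by hypothesis $D_{\mu_\xi^T}^+(q)=1$. The heart of the matter is therefore the invariance of $D^+(q)=1$ under the density transformation $\mu\mapsto|f|^2\mu$, and this is where I expect the main technical difficulty.

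To overcome it I would proceed as follows. Using continuity of $f$ and $|f|^2\mu_{\xi_0}^T\ne 0$, select a compact interval $J$ on which $c\le|f|^2\le M$ with $c>0$ and $\mu_{\xi_0}^T(J)>0$; on $J$ the two measures are equivalent with bounded Radon--Nikodym density, so a direct term-by-term estimate on the pure-point sums defining $D^+$ (using $c\,\mu(B)\le|f|^2\mu(B)\le M\,\mu(B)$ for balls $B\subset J$) gives $D_{|f|^2\mu_{\xi_0}^T}^+(q)\ge D_{(\mu_{\xi_0}^T)|_J}^+(q)$. Promoting this local dimension back to the global value~$1$ is the delicate step, since restriction can in principle decrease $D^+$; I would handle it by a small refinement of the constructive proof of Theorem~\ref{theoGFD}, because the weakly-spaced subsequence $(\lambda_{j_l})$ used to build the dense family there can be chosen inside any prescribed $J$ with $\mu(J)>0$ (via Proposition~\ref{propWS}), and intersecting over a countable basis of such $J$ and rational $q$ produces a (still dense) $G_\delta$ refinement $\tilde G^T\subset G^T$ whose elements satisfy the stronger local property $D_{\mu_\xi^T|_J}^+(q)=1$. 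Combined with the upper bound from Proposition~\ref{propmoments}(iii), this yields $\alpha_B^+(\zeta,p)=p$ for each $p>0$, completing the proof.
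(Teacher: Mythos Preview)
Your Part~1 is correct and is essentially the paper's argument spelled out: the paper just cites Proposition~\ref{propgdelta} and Theorem~\ref{theoGFD}, while you make the passage from each fixed~$q$ to all $q\in(0,1)$ explicit via the monotonicity of $q\mapsto D_\mu^+(q)$ and a countable intersection.

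For Part~2 the paper takes a much shorter route than what you outline. After the upper bound from Proposition~\ref{propmoments}, the paper simply records the pointwise inequality
\[
\mu_{f(T)\xi_0}^T(B(x,\varepsilon))\;\le\;\Bigl(\sup_{y}|f(y)|^2\Bigr)\,\mu_{\xi_0}^T(B(x,\varepsilon))\qquad(x\in\mathbb R,\ \varepsilon>0),
\]
and from this alone concludes $D_{\mu_{f(T)\xi_0}^T}^+(q)\ge D_{\mu_{\xi_0}^T}^+(q)=1$, hence $=1$ by Remark~\ref{GFDrem}; Theorem~\ref{theoBGT} then gives $\alpha_B^+\ge p$. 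No localization to an interval where $|f|^2$ is bounded below, no restriction of the measure, and no refinement of $G^T$ enter the argument.

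Independently of how one views that one-line step, your alternative has a gap as a proof of the theorem \emph{as stated}. The refinement $\tilde G^T\subset G^T$ you build would establish Part~2 only for $\xi\in\tilde G^T$, whereas the theorem asserts it for every $\xi\in G^T$; shrinking the generic set changes the statement. Moreover, your claim that the weakly-spaced subsequence $(\lambda_{j_l})$ ``can be chosen inside any prescribed $J$ (via Proposition~\ref{propWS})'' imports a density hypothesis that Theorem~\ref{maintheorem2} does not assume: under the bare weakly-spaced condition the eigenvalues need not accumulate in an arbitrary interval $J$ with $\mu_\xi^T(J)>0$, so this localization step is not justified in general.
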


\begin{proof} 

\begin{enumerate}
\item[{\rm 1.}] If follows from Proposition \ref{propgdelta} and Theorem \ref{theoGFD}.
    
\item[{\rm 2.}]  Fix $\xi \in G^T$ and $f \in C_c^\infty({\mathbb{R}})$; it follows from Proposition \ref{propmoments} that 
\[\alpha_B^+(f(T)\xi_0,p) \leq p \,\text{ for each }  p>0.\] 

Since, for each  $\epsilon >0$ and each $x \in {\mathbb{R}}$, one has  
\[\mu_{f(T)\xi_0}^T(B(x,\epsilon)) \leq \displaystyle\sup_{y \in {\mathbb{R}}} |f(y)|^2 \, \mu_{\xi_0}^T(B(x,\epsilon)),\]
it follows that  
\[D_{\mu_{f(T)\xi_0}^{T}}^+(q) \geq  D_{\mu_{\xi_0}^{T}}^+(q) =1, \;\; \forall \,   0<q<1,\]
and therefore  
\[D_{\mu_{f(T)\xi_0}^{T}}^+(q)=1, \;\; \forall \, 0<q<1
\]
(see Remark~\ref{GFDrem}). Consequently,  by Theorem~\ref{theoBGT},
\[\alpha_B^+(f(T)\xi_0,p) \geq  p \,\text{ for each }  p>0.\]
\end{enumerate}
\end{proof}

%%%%%%%%%%%%%%%%%%%%%%%%%%%%%%%%%%%%%%%%%%%%%%%%%%%%%%%%%%%%%%%%%%%%%%%%%%%%%%%%%%%%%%%%%%%%%%%%%%%%%%%%%%%%%%%%%%%%%%%%%%%%%%%%%%%%%%%%%%%%%%%%%%%%%%%%%%%%%%%%%%%%%%%%%%%%%%%%%%%%%%%%%%%%%%%%%%%%%%%%%%%%%%%%%%%%%%%%%%%%%%%%%%%%%%%%%%%%%%%%%%%%%%%%%%%%%--Appendix--%%%%%%%%%%%%%%%%%%%%%%%%%%%%%%%%%%%%%%%%%%%%%%%%%%%%%%%%%%%%%%%%%%%%%%%%%%%%%%%%%%%%%%%%%%%%%%%%%%%%%%%%%%%%%%%%%%%%%%%%%%%%%%%%%%%%%%%%%%%%%%%%%%%%%%%%%%%%%%%%%%%%%%%%%%%%%%%%%%%%%%%%%%%%%%%%%%%%%%%%%%%%%%%%%%%%%%%%%%%%%%%%%%%%%%%%%%%%%%%%%%

\appendix

\section{Appendix}\label{Appendix}

\begin{proposition}\label{propWS} Let $-\infty < a<b <\infty$. If $\displaystyle\cup_{j} \{a_j\}$ is a dense subset of $[a,b]$, then $(a_j)$ is weakly-spaced.
\end{proposition}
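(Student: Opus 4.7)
The plan is to construct, for each prescribed $\alpha>0$, a strictly decreasing subsequence of $(a_j)$ whose consecutive gaps $c_l = a_{j_l}-a_{j_{l+1}}$ are positive, monotone, tend to zero, and dominate $C_\alpha/l^{1+\alpha}$. The idea is to fix a target sequence of points $t_l \in [a,b]$ whose gaps $t_l - t_{l+1}$ are a comfortable multiple of $1/l^{1+\alpha}$, and then use density of $\{a_j\}$ to realize each $t_l$ as an $a_{j_l}$ up to very small error.

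Concretely, fix $\alpha>0$ and choose $C_\alpha>0$ small enough that $2C_\alpha \sum_{l\geq 1} l^{-(1+\alpha)} < b-a$. Set $d_l := 2C_\alpha / l^{1+\alpha}$ and $t_l := b - \sum_{k=1}^{l-1} d_k$, so that $(t_l)$ is strictly decreasing in $[a,b]$ with $t_l - t_{l+1} = d_l$. Using the density of $\{a_j\}$ in $[a,b]$, for any prescribed tolerance $\epsilon_l>0$ there are infinitely many indices $j$ with $|a_j - t_l| < \epsilon_l$, so I can pick $j_1 < j_2 < \cdots$ inductively with $|a_{j_l} - t_l| < \epsilon_l$.

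Writing $\delta_l := a_{j_l} - t_l$, one has $c_l = d_l + \delta_l - \delta_{l+1}$ and $c_l - c_{l+1} = (d_l - d_{l+1}) + (\delta_l - 2\delta_{l+1} + \delta_{l+2})$. Since the mean value theorem gives $d_l - d_{l+1} \geq 2C_\alpha(1+\alpha)/(l+1)^{2+\alpha}$, it suffices to require $\epsilon_l \leq K/l^{2+\alpha}$ for a sufficiently small constant $K$: then the perturbation term is dominated by the decrement $d_l - d_{l+1}$, yielding monotonicity $c_l > c_{l+1}$. The same choice (with possibly smaller $K$) also forces $\epsilon_l \leq C_\alpha/(2l^{1+\alpha})$, hence $c_l \geq d_l - 2\epsilon_l \geq C_\alpha/l^{1+\alpha}$, which simultaneously ensures $c_l>0$ and $c_l \to 0$ (since $d_l \to 0$ and $\epsilon_l \to 0$).

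The main obstacle is coordinating monotonicity of $(c_l)$ with the prescribed lower bound: we cannot hit the target points exactly, and each perturbation $\delta_l$ contaminates two consecutive gaps. The trick is that the decrements $d_l - d_{l+1}$ decay like $l^{-(2+\alpha)}$, one order faster than $d_l$, so taking the approximation errors $\epsilon_l$ at the faster scale $l^{-(2+\alpha)}$ absorbs the interference without touching the lower bound $C_\alpha/l^{1+\alpha}$. Once this balancing is set up, the verification of all items in Definition~\ref{def1} is routine.
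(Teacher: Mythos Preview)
Your proof is correct and follows essentially the same strategy as the paper's: fix a target sequence in $[a,b]$ whose successive gaps behave like $l^{-(1+\alpha)}$, approximate each target point by some $a_{j_l}$ with error controlled at the second-difference scale $l^{-(2+\alpha)}$, and then verify monotonicity and the lower bound on $c_l$ from these estimates. The only cosmetic difference is that the paper takes $b_l=a+l^{-\alpha}$ as target points (and proves the needed convexity inequality $b_{l-1}-2b_l+b_{l+1}>0$ directly), whereas you prescribe the gaps $d_l=2C_\alpha l^{-(1+\alpha)}$ from the outset and read off the second differences via the mean value theorem; you are also a bit more careful than the paper in arranging $j_1<j_2<\cdots$ so that $(a_{j_l})$ is a genuine subsequence.
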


\begin{proof} Let $\alpha>0$. Firstly, we note that, for each $x > 1$,
\begin{equation}\label{WS1}
\biggr(\frac{x}{x-1}\biggr)^\alpha + \biggr(\frac{x}{x+1}\biggr)^\alpha > 2.   
\end{equation}
Namely, set
\[f(\alpha) :=  \biggr(\frac{x}{x-1}\biggr)^\alpha + \biggr(\frac{x}{x+1}\biggr)^\alpha.\]
So, 
\begin{eqnarray*}
 \biggr(\frac{x-1}{x}\biggr)^\alpha f'(\alpha) &=&  \ln \biggr(\frac{x}{x-1}\biggr)  - \biggr(\frac{x-1}{x+1}\biggr)^\alpha \ln \biggr(\frac{x+1}{x}\biggr)\\ &>& \ln \biggr(\frac{x}{x-1}\biggr) \biggr(1-\biggr(\frac{x-1}{x+1}\biggr)^\alpha\biggr)>0.
\end{eqnarray*}
Since $f(0) = 2$, the inequality in (\ref{WS1}) follows.   

For each $l \geq 1$, set
\[b_l := a+ \frac{1}{l^\alpha};\]
 by~\eqref{WS1}, for~$l\geq2$ one has $K_l:=b_{l-1} -2b_{l} + b_{l+ 1}>0$.
Note that 
\begin{eqnarray}\label{WS2}
\lim_{l \to \infty} l^{1+\alpha}(b_l -b_{l+1})=\alpha.
\end{eqnarray}
 Now, for~$l$ sufficiently large such that $b_l \in [a,b)$, pick $a_{j_l}$ satisfying 
\begin{equation}\label{WS3}
 0\leq a_{j_l} - b_l \leq \min \biggr\{\frac{K_l}{2},\frac{\alpha}{4l^{1+\alpha}}\biggr\}.   
\end{equation}
Then, by (\ref{WS2}) and (\ref{WS3}), for $l$ sufficiently large, one has 
\begin{eqnarray*}
a_{j_l} - a_{j_{l+1}} &=& (a_{j_l} - b_l) - (a_{j_{l+1}}-b_{l+1}) + (b_l - b_{l+1})\\ &\geq& -\frac{\alpha}{4(l+1)^{1+\alpha}} + \frac{3\alpha}{4l^{1+\alpha}} \geq \frac{\alpha}{2l^{1+\alpha}},
\end{eqnarray*}
\begin{eqnarray*}
a_{j_l} - a_{j_{l+1}} &=& (a_{j_l} - b_l) - (a_{j_{l+1}}-b_{l+1}) + (b_l - b_{l+1})\\ &\leq& \frac{\alpha}{4l^{1+\alpha}} + \frac{7\alpha}{4l^{1+\alpha}} = \frac{2\alpha}{l^{1+\alpha}}.
\end{eqnarray*}
Hence,
\[\frac{\alpha}{2l^{1+\alpha}} \leq a_{j_l} - a_{j_{l+1}} \leq  \frac{2\alpha}{l^{1+\alpha}}.\] 

Moreover,
\begin{eqnarray*}
(a_{j_l} -a_{j_{l+1}})-(a_{j_{l+1}}- a_{j_{l+2}}) &=& (a_{j_l} -2a_{j_{l+1}} + a_{j_{l+2}}) \\ &=& a_{j_l} -b_l -2(a_{j_{l+1}} - b_{l+1}) + a_{j_{l+2}} -b_{l+2} + (b_{l} -2b_{l+1} + b_{l+2})\\ &\geq& -2(a_{j_{l+1}} - b_{l+1}) + K_{l+1} \geq 0, 
\end{eqnarray*}
which implies that $a_{j_l} - a_{j_{l+1}}$ goes to zero monotonically. Therefore, $(a_j)$ is weakly-spaced.
\end{proof}

\begin{proposition}\label{propIM} Let $T$ be a self-adjoint operator in $\mathcal{H}$,  $p>0$ and let $B = \{e_n\}_{n \in {\mathbb{Z}}}$ be an orthonormal basis. Suppose that, for every $\xi \in \mathcal{H}$, $\sum_{n \in {\mathbb{Z}}} |n|^p |\langle e^{-itT}\xi ,e_n \rangle|^2 = \infty$ for  $t = 0$ if, and only if,  $\sum_{n \in {\mathbb{Z}}} |n|^p |\langle e^{-itT}\xi ,e_n \rangle|^2 = \infty$ for all $t \in {\mathbb{R}}$. Then, 
\[G(B) = \{\xi \in {\mathcal{H}}\mid \langle \langle |X|^p \rangle \rangle_{t,\xi} \equiv \infty   \text{ for all } p>0\}\]
is a dense $G_\delta$ set in ${\mathcal{H}}$.
\end{proposition}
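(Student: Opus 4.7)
The plan is to split the proof into two independent parts—the $G_\delta$ structure and density. For notational convenience, I would write $f_{\xi,p}(s) := \sum_{n\in\mathbb{Z}}|n|^p|\langle e^{-isT}\xi,e_n\rangle|^2$ and $F(\xi,t,p) := \int_0^t f_{\xi,p}(s)\,{\mathrm d}s$, so that $t\,\langle\langle|X|^p\rangle\rangle_{t,\xi}=F(\xi,t,p)$; the hypothesis then reads: $f_{\xi,p}(0)=\infty$ if and only if $f_{\xi,p}(s)=\infty$ for every $s\in\mathbb{R}$.

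For the $G_\delta$ structure, I would first reduce the uncountable conjunction in the definition of $G(B)$ to a countable one. Since $|n|^p$ is non-decreasing in $p$ on $\{|n|\geq 1\}$, the set $\{\xi:F(\xi,t,p)=\infty\}$ is non-decreasing in $p$; likewise, $t\mapsto F(\xi,t,p)$ is non-decreasing, so the set is non-decreasing in $t$ as well. Hence
\[G(B)=\bigcap_{n,m\geq 1}\{\xi\in\mathcal{H}\mid F(\xi,1/m,1/n)=\infty\}.\]
For each fixed $t,p$, the functional $\xi\mapsto F(\xi,t,p)$ is lower semicontinuous, being the supremum in $K\geq 1$ of the truncated functionals $F_K(\xi,t,p):=\int_0^t\sum_{|n|\leq K}|n|^p|\langle e^{-isT}\xi,e_n\rangle|^2\,{\mathrm d}s$, each continuous in $\xi$ by dominated convergence on the bounded interval $[0,t]$. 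Therefore $\{\xi:F(\xi,t,p)=\infty\}=\bigcap_{N\geq 1}\{\xi:F(\xi,t,p)>N\}$ is a $G_\delta$ set, and so is $G(B)$.

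For density, given $\xi\in\mathcal{H}$ and $\epsilon>0$, I would perturb $\xi$ by a borderline tail that lies in $\ell^2$ yet has no finite $p$-moment for any $p>0$. Writing $a_n:=\langle\xi,e_n\rangle$ and introducing parameters $N\geq 2$ and $\alpha>0$ to be chosen, set
\[\xi':=\sum_{|n|\leq N}a_n e_n+\alpha\sum_{|n|>N}\frac{e_n}{\sqrt{|n|}\,\ln|n|}.\]
The elementary bound $\|\xi'-\xi\|^2\leq 2\sum_{|n|>N}|a_n|^2+2\alpha^2\sum_{|n|\geq 2}|n|^{-1}(\ln|n|)^{-2}$, together with the convergence of the series in the second summand (integral test), makes $\|\xi'-\xi\|<\epsilon$ attainable by choosing $N$ large and then $\alpha$ small. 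On the other hand,
\[f_{\xi',p}(0)\;\geq\;\alpha^2\sum_{|n|>N}\frac{|n|^{p-1}}{(\ln|n|)^2}=\infty,\qquad p>0,\]
since $|n|\cdot|n|^{p-1}/(\ln|n|)^2=|n|^p/(\ln|n|)^2\to\infty$ as $|n|\to\infty$, so the general term eventually dominates $1/|n|$. The hypothesis then upgrades $f_{\xi',p}(0)=\infty$ to $f_{\xi',p}(s)=\infty$ for every $s$, whence $F(\xi',t,p)=\infty$ for all $t>0$ and $p>0$, so $\xi'\in G(B)$.

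The principal technical point is the choice of tail: the decay $c_n\sim|n|^{-1/2}(\ln|n|)^{-1}$ sits exactly on the borderline between $\ell^2$, which controls the norm of the perturbation, and the weighted spaces $\{(c_n):\sum_n|n|^p|c_n|^2<\infty\}$ for every $p>0$, forcing simultaneous divergence of all $p$-moments at $t=0$. Everything else is routine bookkeeping once the hypothesis is used to transport the $t=0$ condition to every $s\in\mathbb{R}$.
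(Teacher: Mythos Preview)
Your proof is correct and follows the same overall strategy as the paper---show that the relevant moment functionals are lower semicontinuous to obtain the $G_\delta$ structure, and then perturb an arbitrary $\xi$ by a suitable tail to get density---but there are two genuine differences worth noting.

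First, the paper opens by using the hypothesis at once to identify $G(B)$ with the $t=0$ set $\{\xi:\sum_n|n|^p|\langle\xi,e_n\rangle|^2=\infty\ \text{for all }p>0\}$, and then proves everything for this simpler set. You instead work directly with the time-integrated functionals $F(\xi,t,p)$, reducing over $t$ and $p$ by monotonicity; this makes your $G_\delta$ argument independent of the hypothesis (a mild gain), at the cost of slightly heavier bookkeeping.

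Second, and more interestingly, the density arguments differ. The paper fixes $p>0$, perturbs $\xi$ with the $p$-dependent tail $|n|^{-(p+1)/2}$ to show that $\{\xi:\sum_n|n|^p|\langle\xi,e_n\rangle|^2=\infty\}$ is dense, and then invokes Baire's theorem to intersect over rational $p>0$. Your single borderline tail $|n|^{-1/2}(\ln|n|)^{-1}$ lies in $\ell^2$ yet has infinite $p$-moment for \emph{every} $p>0$, so one perturbation lands directly in $G(B)$ and no second appeal to Baire is needed. This is a cleaner construction; the paper's version is more elementary but requires the extra categorical step.
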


\begin{proof} One just has to show that    
\[\{\xi \in {\mathcal{H}}\mid \sum_{n \in {\mathbb{Z}}} |n|^p |\langle \xi ,e_n \rangle|^2 = \infty   \text{ for all } p>0\}\] is a dense $G_\delta$ set in ${\mathcal{H}}$.

Since for each $j\geq 1$, the mapping 
\[{\mathcal{H}} \ni \xi \longmapsto \sum_{|n|\leq j} |n|^p |\langle \xi ,e_n \rangle|^2\]
is continuous, it follows that, for each $p>0$,
\begin{eqnarray*}
\{\xi \in {\mathcal{H}} \mid \sum_{n \in {\mathbb{Z}}} |n|^p |\langle \xi ,e_n \rangle|^2 = \infty\} =    \bigcap_{k \geq 1} \bigcap_{j\geq 1} \{ \xi \in {\mathcal{H}} \mid \sum_{|n|\leq j} |n|^p |\langle \xi ,e_n \rangle|^2 > k\}
\end{eqnarray*}
is a $G_\delta$ set in $\mathcal{H}$. Now, for each fixed $p>0$, $\xi \in \mathcal{H}$ and $j\in\mathbb{N}$, set 
\[\xi_j := \displaystyle\sum_{|n|\leq j} \langle \xi ,e_n \rangle e_n + \displaystyle\sum_{|n|>j}^\infty  \frac{1}{\sqrt{|n|^{p+1}}}e_n.\]
It is clear that $\xi_j \rightarrow \xi$ in $\mathcal{H}$. Moreover, for each $j \geq 1$, $\sum_{n \in {\mathbb{Z}}} n^p |\langle \xi_j ,e_n \rangle|^2 = \infty$. Thus, for each $p>0$, $\{\xi \in {\mathcal{H}} \mid \sum_{n \in {\mathbb{Z}}} |n|^p |\langle \xi ,e_n \rangle|^2 = \infty\}$ is dense in ${\mathcal{H}}$, and therefore, by Baire's Theorem,  
\[\{\xi \in {\mathcal{H}}\mid \sum_{n \in {\mathbb{Z}}} |n|^p |\langle \xi ,e_n \rangle|^2 = \infty   \text{ for all } p>0\} = \bigcap_{p\in {\mathbb{Q}}^+} \{\xi \in {\mathcal{H}} \mid \sum_{n \in {\mathbb{Z}}} |n|^p |\langle \xi ,e_n \rangle|^2 = \infty\}\]
is a dense $G_\delta$ set in ${\mathcal{H}}$, where ${\mathbb{Q}}^+:=\{x\in {\mathbb{Q}}\mid x>0\}$.
\end{proof}

%%%%%%%%%%%%%%%%%%%%%%%%%%%%%%%%%%%%%%%%%%%%%%%%%%%%%%%%%%%%%%%%%%%%%%%%--Acknowledgments--and--bibliography--%%%%%%%%%%%%%%%%%%%%%%%%%%%%%%%%%%%%%%%%%%%%%%%%%%%%%%%%%%%%%%%%%%%%%%%%%%%%%%%%%%%%%%%%%%%%%%%%%%%%%%%%%%%%%%%%%%%%%%%%%%%%%%%%%%%%%%%%%%%%%%%%%%%%%%%%%%%%%%%%%%%%%%%%%%%%%%%%%%%%%%%%%%%%%%%%%%%%%%%%%%%%%%%%%%%%%%%%%%%%%%%%%%%%%%%%%%%%%%%%%%%%%%%%%%%%%%%%%%%%%%%%%%%%%%%%%%%%%%%%%%%%%%%%%%%%%%%%%%%%%%%%%%%%%%%%%%%%%%%%%%%%%%%%%%%%%%%%%%%%%%%%%%%%%%%%%%%%%%%%%%%%%%%%%%%%%%%%%%%

\begin{center} \Large{Acknowledgments} 
\end{center}

M.A.\ was supported by CAPES (a Brazilian government agency). S.L.C.\ thanks the to partial support by FAPEMIG (a Brazilian government agency; Universal Project 001/17/CEX-APQ-00352-17).

\end{document}